\documentclass[letterpaper]{article}
\usepackage[preprint]{aaai2027}
\usepackage[hyphens]{url}
\usepackage{graphicx}
\usepackage{natbib}
\usepackage{caption}
\usepackage{algorithm}
\usepackage{algorithmic}

\usepackage{booktabs}

\usepackage{amsmath}
\usepackage{amssymb}
\usepackage{mathtools}
\usepackage{amsthm}
\usepackage{thmtools}
\usepackage{thm-restate}
\usepackage[inline]{enumitem}
\usepackage[capitalize,noabbrev]{cleveref}
\usepackage{float}

\title{A Compositional Theory of Causally Masked Transformers}
\author{
    Franz Nowak\textsuperscript{\rm 1}, Ryan Cotterell\textsuperscript{\rm 1}, Reda Boumasmoud\textsuperscript{\rm 1}\\
    \normalfont\{franz.nowak, ryan.cotterell\}@inf.ethz.ch, reda.boumasmoud@math.ethz.ch 
}
\affiliations{
    \textsuperscript{\rm 1}ETH Zürich
}

\begin{document}
\renewcommand{\theequation}{\arabic{equation}}

\crefname{section}{Section}{Sections}
\Crefname{section}{Section}{Sections} 

\crefname{appendix}{Appendix}{Appendices}
\Crefname{appendix}{Appendix}{Appendices}
\crefformat{appendix}{Appendix~#2#1#3}
\Crefformat{appendix}{Appendix~#2#1#3}
\crefrangeformat{appendix}{Appendices~#3#1#4\textendash#5#2#6}
\Crefrangeformat{appendix}{Appendices~#3#1#4\textendash#5#2#6}

\crefname{table}{Table}{Tables}
\Crefname{table}{Table}{Tables}
\crefname{figure}{Fig.}{Figs.}
\Crefname{figure}{Figure}{Figures}

\crefname{algorithm}{Algorithm}{Algorithms}
\Crefname{algorithm}{Algorithm}{Algorithms}

\crefname{equation}{Eq.}{Eqs.}
\Crefname{equation}{Equation}{Equations}
\creflabelformat{equation}{#2#1#3} 

\crefname{theorem}{Theorem}{Theorems}
\Crefname{theorem}{Theorem}{Theorems}
\crefname{lemma}{Lemma}{Lemmas}
\Crefname{lemma}{Lemma}{Lemmas}
\crefname{corollary}{Corollary}{Corollaries}
\Crefname{corollary}{Corollary}{Corollaries}
\crefname{proposition}{Proposition}{Propositions}
\Crefname{proposition}{Proposition}{Propositions}
\crefname{definition}{Definition}{Definitions}
\Crefname{definition}{Definition}{Definitions}
\crefname{claim}{Claim}{Claims}
\Crefname{claim}{Claim}{Claims}
\crefname{example}{Example}{Examples}
\Crefname{example}{Example}{Examples}
\crefname{fact}{Fact}{Facts}
\Crefname{fact}{Fact}{Facts}
\crefname{question}{Question}{Questions}
\Crefname{question}{Question}{Questions}
\crefname{conjecture}{Conjecture}{Conjectures}
\Crefname{conjecture}{Conjecture}{Conjectures}
\crefname{remark}{Remark}{Remarks}
\Crefname{remark}{Remark}{Remarks}

\theoremstyle{plain}
\newtheorem{theorem}{Theorem}
\newtheorem{lemma}[theorem]{Lemma}
\newtheorem{proposition}[theorem]{Proposition}
\newtheorem{corollary}[theorem]{Corollary}
\newtheorem{claim}[theorem]{Claim}
\newtheorem{fact}[theorem]{Fact}
\newtheorem{definition}[theorem]{Definition}
\newtheorem{example}[theorem]{Example}
\newtheorem{question}[theorem]{Question}
\newtheorem{conjecture}[theorem]{Conjecture}
\newtheorem{condition}[theorem]{Condition}

\theoremstyle{remark}
\newtheorem{remark}[theorem]{Remark}

\newcommand{\R}{\mathbb{R}}
\newcommand{\C}{\mathbb{C}}
\newcommand{\Z}{\mathbb{Z}}
\newcommand{\F}{\mathbb{F}}
\newcommand{\Q}{\mathbb{Q}}
\newcommand{\N}{\mathbb{N}}

\makeatletter
\newcommand{\citeposs}[1]{%
  \citeauthor{#1}'s\ (\citeyear{#1})%
}
\makeatother

\newcommand{\defn}[1]{\textbf{#1}}
\newcommand{\defeq}{\mathrel{\stackrel{\textnormal{\tiny def}}{=}}}

\newcommand{\id}{\mathrm{Id}}
\newcommand{\sgn}{\mathrm{sign}}
\renewcommand{\ker}{\mathop{\mathrm{Ker}}}
\newcommand{\im}{\mathop{\mathrm{Im}}}
\newcommand{\ord}{\mathop{\text{ord}}}
\newcommand{\rank}{\mathop{\text{rank}}}
\newcommand{\Tr}{\mathop{\text{Tr}}}
\newcommand{\argmin}{\mathop{\text{argmin}}}
\newcommand{\argmax}{\mathop{\text{argmax}}}
\newcommand{\bigo}[1]{\mathcal{O}\left(#1\right)}

\newcommand{\norm}[1]{||#1||}
\newcommand{\innerprod}[1]{\langle{#1}\rangle}
\newcommand{\E}{\mathop{\mathbb{E}}}

\newcommand{\ug}{\leq}
\newcommand{\normal}{\trianglelefteq}
\newcommand{\teilt}{\big{|}}
\newcommand{\zykl}[1]{{<}{#1}{>}}
\newcommand{\ggt}{\mathop{\text{ggT}}}
\newcommand{\iso}{\simeq}
\newcommand{\isom}{\mathop{\text{Isom}}}
\newcommand{\aut}{\mathop{\text{Aut}}}
\newcommand{\stab}{\mathop{\text{Stab}}}
\newcommand{\operateson}{\curvearrowright}
\newcommand{\mat}{\mathrm{Mat}}
\newcommand{\trans}{\mathcal{T}}

\newcommand{\mA}{\mathbf{A}}
\newcommand{\mB}{\mathbf{B}}
\newcommand{\mC}{\mathbf{C}}
\newcommand{\mD}{\mathbf{D}}
\newcommand{\mlow}{\mathbf{M}^{\text{low}}}
\newcommand{\vu}{\boldsymbol{u}}
\newcommand{\vv}{\boldsymbol{v}}
\newcommand{\vx}{\boldsymbol{x}}

\newcommand{\rv}[1]{\underline{\underline{#1}}}
\newcommand{\noise}{\boldsymbol{w}}
\newcommand{\noiset}{\boldsymbol{\tilde{w}}}
\newcommand{\betahat}{\boldsymbol{\hat\beta}}
\newcommand{\xmax}{\tilde{x}}
\newcommand{\eps}{\varepsilon}
\newcommand{\Aff}{\mathop{\mathrm{Aff}}}
\newcommand{\I}{\text{I}}

\newcommand{\alphabet}{\Sigma}
\newcommand{\eosalphabet}{{\overline{\alphabet}}}
\newcommand{\posalphabet}{{\hat{\alphabet}}}
\newcommand{\kleene}[1]{{#1}^{\ast}}
\newcommand{\sym}{\sigma}
\newcommand{\str}{w}
\newcommand{\lang}{\mathcal{L}}
\newcommand{\emptyword}{\varepsilon}
\newcommand{\charfun}[1]{\chi_{#1}}

\newcommand{\monoid}{M}
\newcommand{\semigroup}{S}

\newcommand{\transsem}{T}

\newcommand{\wreathsem}{W}
\newcommand{\wreath}{\mathbb{W}}

\newcommand{\identity}[1]{\mathrm{id}_{#1}}

\newcommand{\subsem}{\leq}

\newcommand{\divides}{\prec}

\newcommand{\synsem}[1]{S(#1)}
\newcommand{\syncong}[1]{\sim_{#1}}

\newcommand{\genmon}[1]{\bigl\langle #1 \bigr\rangle_{\mathrm{mon}}}
\newcommand{\gensem}[1]{\bigl\langle #1 \bigr\rangle_{\mathrm{sem}}}

\newcommand{\wrclos}[1]{\langle #1 \rangle_{\wr}}

\newcommand{\pseudovariety}[1]{\mathbf{#1}}
\newcommand{\Aperiodic}{\pseudovariety{Ap}}
\newcommand{\Rtrivial}{\pseudovariety{R}}
\newcommand{\Definite}{\pseudovariety{Def}}
\newcommand{\LocallyR}{\pseudovariety{LR}}

\newcommand{\Rrel}{\mathcal{R}}
\newcommand{\Lrel}{\mathcal{L}}
\newcommand{\Hrel}{\mathcal{H}}
\newcommand{\Jrel}{\mathcal{J}}

\newcommand{\cyclic}[1]{\Z/{#1}\Z}

\newcommand{\transducer}{\mathcal{M}}

\newcommand{\extdelta}{\widehat{\delta}}
\newcommand{\extgamma}{\widehat{\gamma}}

\newcommand{\core}{\mathfrak{c}}
\newcommand{\acceptor}{\mathcal{A}}
\newcommand{\rnn}{\mathcal{R}}
\newcommand{\augrnn}{{\rnn^{+}}}
\newcommand{\transformer}{\mathcal{T}}
\newcommand{\augtransformer}{{\transformer^{+}}}
\newcommand{\acccore}{{\core_{\bullet}}}
\newcommand{\inicore}{{\core_{\circ}}}
\newcommand{\trivcore}{\core_{\mathrm{triv}}}

\newcommand{\cascadewith}[1]{\mathbin{\rhd_{#1}}}
\newcommand{\rnntype}[1]{\mathrm{type}(#1)}
\newcommand{\insettype}{\mathrm{In}}
\newcommand{\outsettype}{\mathrm{Out}}
\newcommand{\Algrnn}{\mathbf{AlgRNN}}

\newcommand{\inset}{X}
\newcommand{\insym}{x}
\newcommand{\inseq}{\mathbf{x}}
\newcommand{\outset}{Y}
\newcommand{\outsym}{y}
\newcommand{\outseq}{\mathbf{y}}
\newcommand{\states}{Z}
\newcommand{\state}{z}

\newcommand{\query}{{q}}
\newcommand{\Queries}{{Q}}
\newcommand{\qryop}{{\operatorname{qry}}}
\newcommand{\key}{{k}}
\newcommand{\Keys}{K}
\newcommand{\keyop}{{\operatorname{key}}}
\newcommand{\val}{{v}}
\newcommand{\Values}{V}
\newcommand{\valop}{{\operatorname{val}}}
\newcommand{\softmax}{\sigma}
\newcommand{\attention}{{\operatorname{Att}}}
\newcommand{\head}{H}
\newcommand{\prjfn}{\pi}
\newcommand{\mixfn}{\mu}
\newcommand{\similarity}{\alpha}

\newcommand{\hidsym}{\state}
\newcommand{\hidset}{\states}

\newcommand{\recfn}{f}
\newcommand{\outfn}{g}

\newcommand{\nlayers}{N}
\newcommand{\layer}{n}

\newcommand{\coretuple}{(\states, \inset, \outset, \recfn, \outfn)}
\newcommand{\corelayertuple}{(\states_\layer, \inset_\layer, \outset_\layer, \recfn_\layer, \outfn_\layer)}

\newcommand{\wiremap}{\tau}

\newcommand{\enc}{e}
\newcommand{\dec}{d}
\newcommand{\din}{{\dseq_\textit{in}}}
\newcommand{\dout}{{\dseq_\textit{out}}}
\newcommand{\encletter}{e_0}
\newcommand{\encword}{\widetilde{e}}

\newcommand{\initstate}{\state^{\circ}}
\newcommand{\accreg}{F}

\newcommand{\extF}[1]{\widehat{F}_{#1}}
\newcommand{\extG}[1]{\widehat{G}_{#1}}

\newcommand{\projlayer}[1]{\pi_{#1}}

\newcommand{\Lang}[1]{\lang(#1)}

\newcommand{\Phisem}{\Phi}

\newcommand{\hiddim}{m}
\newcommand{\seqdim}{d}
\newcommand{\seqlen}{T}
\newcommand{\tstep}{t}
\newcommand{\seqfn}{N}

\newcommand{\arith}{\mathfrak{M}}
\newcommand{\domain}{\mathcal{D}}
\newcommand{\ops}{\mathcal{O}}
\newcommand{\round}{\square}
\newcommand{\wrap}{\tikz{\node[circle,inner sep=0,outer sep=0,draw]{$\phantom{o}$}}}

\newcommand{\fp}{\textsc{fp}}
\newcommand{\integ}{\textsc{int}}

\newcommand{\tree}{\mathcal{T}}
\newcommand{\Trees}{\mathbf{Tree}}

\newcommand{\aperiodic}{\Aperiodic}

\newcommand{\countl}{\textsc{ModCount}}

\newcommand{\rmsnorm}{\mathrm{norm}}
\newcommand{\rms}{\mathrm{RMS}}
\newcommand{\relu}{\mathrm{ReLU}}
\newcommand{\diag}{\mathrm{Diag}}

\newcommand{\uone}{U_1}
\newcommand{\utwo}{U_2}
\newcommand{\uthree}{U_3}
\newcommand{\barUone}{\overline{U}_1}
\newcommand{\barUtwo}{\overline{U}_2}
\newcommand{\barUthree}{\overline{U}_3}

\newcommand{\realrnn}{\R\text{-}\mathbf{RNN}}
\newcommand{\finrnn}{\mathbf{Fin}\text{-}\mathbf{RNN}}
\newcommand{\functor}{\mathcal{F}}
\newcommand{\rnnsemigroup}{(\semigroup_\rnn,\states_\rnn)}
\newcommand{\layerwreath}{(\wreathsem_\rnn,\states_\rnn)}

\maketitle

\begin{abstract}
What types of decision problems can a causally masked, finite-precision transformer solve for inputs of arbitrary length? 
Existing answers often rely on idealized arithmetic, but under finite precision, rounding and evaluation order can change what information attention retains and therefore what the model can compute. 
We develop an algebraic formalization that derives expressivity directly from the model's implemented dynamics. 
Its central object is its memory; the finite internal state computed by attention that summarizes the information from the prefix available to all future queries. 
Each attention head updates its own state independently within a layer, while layers compose hierarchically, providing a uniform route from model assumptions to expressivity bounds.
Applying this method to transformers without positional embeddings, we obtain an expressivity hierarchy governed by the attention type under specific numerical semantics.
Width-one sliding-window attention supports bounded-suffix memory, while a modified form of soft attention supports irreversible, checklist-like state, and combining the two mechanisms provides an interplay of both.
Ordinary left-to-right floating-point soft attention can realize more expressive memory operations than any of the above.
Algebraically, the four cases correspond to definite, $\Rrel$-trivial, locally $\Rrel$-trivial, and aperiodic semigroups.
Under an explicit free-wiring assumption, all four bounds are tight.\looseness=-1
\end{abstract}

\section{Introduction}
We study the computational power of masked transformer language models whose precision and parameters are fixed independently of the input length.
Can they implement a reusable procedure, such as tracking latent state, enforcing a logical constraint, or counting, uniformly over arbitrarily long strings with fixed parameters? 
Formal language theory makes this question precise by asking which classes of languages a model family can recognize.
Existing transformer expressivity analyses typically answer it through an external formalism, such as linear temporal logic \citep{li2025characterizing}, circuit complexity \citep{merrill-etal-2022-saturated}, or restricted programming languages such as RASP \citep{pmlr-v139-weiss21a}; see \citet{strobl-etal-2024-formal} for a survey. 
Such characterizations are informative, but their analytical interface often changes with the modeling assumptions.\looseness=-1

Our approach differs in two related respects. First, we derive expressivity directly from the transition dynamics realized by the architecture, rather than through an external formalism. 
Second, we specify these dynamics at the level of the implemented arithmetic. 
Existing fixed-precision analyses model attention accumulation as associative and hence invariant to evaluation order. 
However, ordinary floating-point addition is not associative and can absorb small increments, so evaluation order can change both the induced state transitions and the information attention retains.

Under finite precision, causal masking, and a fixed left-to-right evaluation order, each head has a finite, implementation-sensitive state: a query-indexed collection of accumulators updated recursively as the prefix is read.
This finite state is a sufficient statistic of the prefix for the head's entire future behavior (\cref{prop:head-sufficiency}); it is its operational memory.
Unlike the growing key--value cache, it captures only information accessible through attention; unlike the current residual activation, it covers every possible future query rather than only the current one, placing it at the right level of abstraction when analyzing transformers as transition systems.

Attention is the mechanism that transforms this memory: each new key--value pair induces a transition of the head state.
The architecture composes these transformations, with heads acting independently in parallel and successive layers acting hierarchically.
These memory operations form transformation semigroups, making a causal transformer under our assumptions an algebraic RNN in the sense of \citet{nowak2026an}, so expressivity follows from the operations available to each attention mechanism and their composition across layers.\looseness=-1

We illustrate this analytic framework on finite-precision causal transformers without positional embeddings. 
At the level of individual heads, width-one sliding-window attention has a definite transition semigroup, sharp soft attention has an $\Rrel$-trivial transition semigroup, and full floating-point soft attention has an aperiodic transition semigroup.
Our composition results lift these head-level properties to upper bounds on complete transformers: architectures using width-one sliding-window heads recognize only definite languages, those using sharp soft-attention heads recognize only $\Rrel$-trivial languages, suitable cascades combining the two recognize only locally $\Rrel$-trivial languages, and those using full soft attention recognize only star-free languages.
The last bound excludes parity and all modular counting, irrespective of depth, width, or number of heads.
Under an explicit free-wiring assumption, all four bounds are tight.

The separation between sharp and full soft attention illustrates why arithmetic is part of the computational model.
Under floating-point evaluation, a contribution can be lost in the normalizer while still changing the accumulated value, creating memory behavior that the sharp variant excludes.
Algebraically, this is precisely the distinction between $\Rrel$-trivial and aperiodic behavior.

\begin{figure*}
    \centering
    \includegraphics[width=.6\linewidth, clip, trim=0cm 1.cm 0cm 1.7cm]{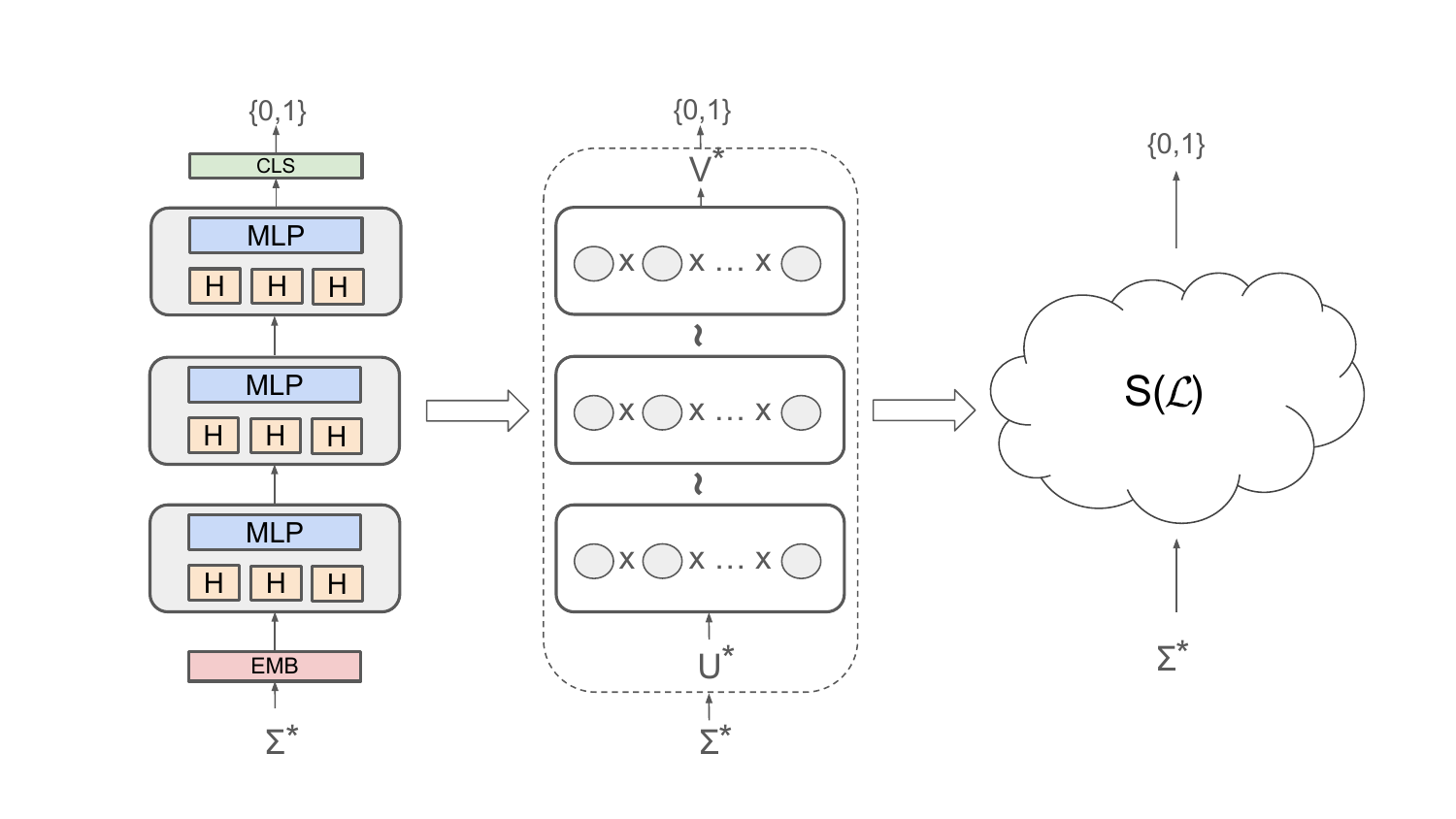}
    \caption{Abstraction of a transformer to an algebraic object to a syntactic semigroup.}
    \label{fig:abstraction}
\end{figure*}

\section{Algebraic Preliminaries}
Here we introduce the basic notions from semigroup theory, algebraic automata, and formal languages underlying this paper. Further preliminaries are found in \cref{app:prelim}.
For an accessible introduction, we refer the reader to \citet{pin:hal-03579131}. 

\begin{definition}
An \defn{alphabet} is a finite, nonempty set $\alphabet$, whose elements are called \defn{symbols}. A \defn{word} over $\alphabet$ is a finite sequence of symbols from $\alphabet$. 
\end{definition}
We write $\alphabet^+$ for the set of all non-empty finite words over $\alphabet$, and $\alphabet^*$ if this set additionally includes the empty word $\emptyword$. 

\begin{definition}
A \defn{language} over $\alphabet$ is a subset $\lang \subseteq \alphabet^*$. Its \defn{characteristic function} is the map
\begin{equation}
    \charfun{\lang} \colon \alphabet^* \to \{0,1\},
    \qquad
    \charfun{\lang}(\str) =
    \begin{cases}
    1, & \str \in \lang,\\
    0, & \str \notin \lang.
    \end{cases}
\end{equation}
\end{definition}

We treat transformer language models as language \emph{recognizers}. 
That is, the task is to compute $\charfun{\lang}$ rather than a distribution over the next token.

\begin{definition}
A \defn{semigroup} is a nonempty set $\semigroup$ equipped with an associative binary operation $\cdot$:
$
(s_1 \cdot s_2) \cdot s_3
=
s_1 \cdot (s_2 \cdot s_3)
$
for all $s_1,s_2,s_3 \in \semigroup$.
\end{definition}
When the operation is clear from context, we denote the semigroup simply by $\semigroup$.
For a given alphabet $\alphabet$, the set $\alphabet^+$ of nonempty words over $\alphabet$ is a semigroup under concatenation.
It is the free semigroup on $\alphabet$.

\begin{definition}
Let $\semigroup$ and $T$ be semigroups. A \defn{semigroup morphism} is a map $\phi \colon \semigroup \to T$ such that
$\phi(s_1 \cdot s_2) = \phi(s_1) \cdot \phi(s_2)$ for all $s_1,s_2 \in \semigroup$.
\end{definition}

In particular, a map from the alphabet $\alphabet$ into a semigroup $T$ determines a unique semigroup morphism from $\alphabet^+$ to $T$ by extending the map through concatenation.

\begin{definition}
A \defn{transformation semigroup} $(\semigroup,\states)$ consists of a state set $\states$ and a subsemigroup $\semigroup \subsem \states^{\states}$. More generally, a right action of a semigroup $\semigroup$ on $\states$ is a semigroup morphism
$\rho \colon \semigroup \to \states^{\states}$.
We write $\state \cdot s \defeq \rho(s)(\state)$. 
It follows that $\state\cdot(st) = (\state\cdot s)\cdot t$.
Replacing $\semigroup$ by its image $\rho(\semigroup)$ yields the corresponding transformation semigroup.
\end{definition}

We use two compositions of transformation semigroups. Parallel composition describes components whose state updates are independent, whereas cascade composition permits the transition of one component to depend on the state of another.\looseness=-1

\begin{definition}\label{def:parallel-composition}
Let $(\semigroup_1,\states_1),\ldots,(\semigroup_n,\states_n)$ be transformation semigroups. Their \defn{parallel composition} is the transformation semigroup
$
(\semigroup_1 \times \cdots \times \semigroup_n,
 \states_1 \times \cdots \times \states_n)
$
with componentwise action 
$(\state_1,\ldots,\state_n) \cdot (s_1,\ldots,s_n)
\defeq
(\state_1 \cdot s_1,\ldots,\state_n \cdot s_n)$.

\end{definition}

\begin{definition}\label{def:wreath-product}
Let $(\semigroup_1,\states_1)$ and $(\semigroup_2,\states_2)$ be transformation semigroups. Their \defn{left wreath product}, denoted
$
(\semigroup_1,\states_1) \wr (\semigroup_2,\states_2),
$
has state set $\states_1 \times \states_2$ and underlying set $\semigroup_1 \times \semigroup_2^{\states_1}$, where $\semigroup_2^{\states_1}$ is the set of maps $\states_1 \to \semigroup_2$. Its multiplication and action are given by
\begin{align}
(s_1,\phi_1)(s_2,\phi_2)
&\defeq
\bigl(s_1s_2,\; \state_1 \mapsto
\phi_1(\state_1)\phi_2(\state_1\cdot s_1)\bigr),\\
 (\state_1,\state_2) \cdot (s,\phi)
&\defeq
\bigl(\state_1 \cdot s, \state_2 \cdot \phi(\state_1) \bigr).
\end{align}
\end{definition}

The left-wreath convention evaluates $\phi$ at the first component's state before that component is updated. Iterated wreath products are associative up to canonical isomorphism, so we omit parentheses when no ambiguity arises.

\begin{definition}\label{def:division}
Let $\semigroup_1$ and $\semigroup_2$ be semigroups. We say that $\semigroup_1$ \defn{divides} $\semigroup_2$, written $\semigroup_1 \divides \semigroup_2$, if there exist a subsemigroup $T \subsem \semigroup_2$ and a surjective semigroup morphism
$\psi \colon T \twoheadrightarrow \semigroup_1.$
\end{definition}

Division is a preorder on semigroups: every semigroup divides itself, and if $\semigroup_1 \divides \semigroup_2$ and $\semigroup_2 \divides \semigroup_3$, then $\semigroup_1 \divides \semigroup_3$.

\section{Algebraic Transformers}
We algebraically describe a specific class of transformer language models that satisfy the following assumptions:
\begin{itemize}
    \item \textbf{F}: Finite-precision. All operations are carried out in constant-precision floating-point arithmetic.
    \item \textbf{M}: Causally masked attention. The attention score at position $t$ depends only on inputs $\leq t$.
    \item \textbf{P}: Positional embeddings are absent (NoPE).
    \item \textbf{E}: Fixed evaluation order. Attention is computed sequentially from left to right.
\end{itemize}
\subsection{Attention Heads}

\subsubsection{Attention Heads as Transition Systems}

We define the state of an attention head by expressing attention as a fold over finite accumulators.

\paragraph{Attention as a left-to-right fold.}
We first recall causal soft attention in its usual form.
At position $\tstep$, a head forms a query $\query_\tstep \in \Queries$, keys $\key_i \in \Keys$, and values $\val_i \in \Values$, where $1\leq i\leq t$, and $\Queries$, $\Keys$, and $\Values$ are the finite sets of possible queries, keys, and values (often in vector form), and returns
\begin{equation}
    \attention(\query_\tstep; \key_{< \tstep},\val_{< \tstep})
    =
    \sum_{i < \tstep}\frac{ \exp \similarity(\query_\tstep,\key_i)\,\val_i}
         {\sum_{j < \tstep} \exp \similarity(\query_\tstep,\key_j)},
\end{equation}
where $\similarity(\cdot,\cdot)$ is usually a scaled dot product when keys and queries are vectors.
Under assumption \textbf{E}, the sums are evaluated from left to right.
For a fixed query $\query$, define the soft-attention accumulator $a_\query = (n_\query,d_\query)$, representing the accumulated numerator and denominator, respectively.
Let $A$ denote the finite set of possible accumulator values.
To process a sequence of key-value pairs from left to right, we initialize the accumulator at the additive identity $a_{\query,0}=(0,0)\in A$.
Reading a new key-value pair $(\key_i,\val_i)$ at step $i\leq \tstep$ updates the running accumulator state $a_{\query,i}\in A$ via the transition function
\begin{equation}
    \delta_\query \colon A \times \Keys \times \Values \to A
\end{equation}
given by the recurrence
\begin{align}
    a_{\query,i} &= \delta_\query((n_{\query,i-1},d_{\query,i-1}),(\key_i,\val_i))\label{eq:att-rec}\\
    &=
    \bigl(n_{\query,i-1} + \exp \similarity(\query,\key_i)\val_i, d_{\query,i-1} + \exp \similarity(\query,\key_i)\bigr),\nonumber
\end{align}
with all operations evaluated under finite-precision semantics following assumption \textbf{F}.
The final attention output at step $\tstep$ is obtained only after completing the fold over all $i\leq t$, applying the readout
\begin{equation}
    \rho_\query \colon A \to \outset,
    \qquad
    \rho_\query(n,d) = n/d \label{eq:att-out}
\end{equation}
to the final accumulator state for query $\query$.
Other attention mechanisms can be formalized similarly by changing the accumulator and its update rule (see \cref{sec:case-study}).\footnote{Computing attention as a single query-dependent input pass is standard in online-attention algorithms like FlashAttention \citep{dao2022flashattention}, cf. online softmax \citep{Milakov2018OnlineNC}.}

\paragraph{The head state.}
We now pass from the attention implementation to the abstract state-transition system it induces.
Because the arithmetic domain is finite, the set $A$ of possible accumulator values is finite.
The relevant state is therefore not the unbounded list of past keys and values, but the finite summary produced by the prescribed scan.
The preceding accumulator was for a fixed query $\query$, but a head must be reusable for any query that may be produced later.
Since $\Queries$ is finite, we take \defn{the attention-head state set} to be the query-indexed family of accumulators
$
    \states_\head \defeq A^\Queries.
$

\paragraph{The head transitions.}
An element $\state \in \states_\head$ assigns to every possible query $\query \in \Queries$ the current prefix accumulator $\state(\query) \in A$.
Writing a new key-value pair $(\key,\val)$ acts on this state by updating every query accumulator:
\begin{equation}
    \Delta_{(\key,\val)}(\state)(\query)
    =
    \delta_\query(\state(\query),(\key,\val)),
    \qquad \query \in \Queries.
\end{equation}
The current query does not determine the state transition, only which component of the state is read via $\rho_\query(\state(\query))$.

\paragraph{The head transition system.}
Let the initial head state be $\initstate_\head \in \states_\head$.
If the current input of the head is $\insym \in \inset$ and the head computes query, key, and value maps $\query=\qryop(\insym)$, $\key=\keyop(\insym)$, and $\val=\valop(\insym)$, then the transition map is
\begin{equation}
    \recfn_\insym^\head \colon \states_\head \to \states_\head,
    \qquad
    \recfn_\insym^\head(\state)
    =
    \Delta_{(\keyop(\insym),\valop(\insym))}(\state),
\end{equation}
and the readout map is
\begin{equation}
    \outfn_\head \colon \states_\head \times \inset \to \outset,
    \qquad
    \outfn_\head(\state,\insym)
    =
    \rho_{\qryop(\insym)}
    \bigl(\state(\qryop(\insym))\bigr).
\end{equation}
Thus, the attention head induces a finite transition system
\begin{equation}
    (\states_\head,\initstate_\head,\inset,\recfn_\head),
    \qquad
    \recfn_\head(\state,\insym) = \recfn_\insym^\head(\state),
\end{equation}
with transition semigroup
\begin{equation}
    \transsem_\head
    \defeq
    \gensem{\recfn_\insym \mid \insym \in \inset}
    \subsem \states_\head^{\states_\head}.
\end{equation}
The map $\outfn_\head$ then specifies how this finite state is observed via the current query. 
Note it does not itself affect the transition semigroup of the head.

\begin{proposition}[Sufficiency of the head state]
\label{prop:head-sufficiency}
Let $\str_1,\str_2 \in \inset^{+}$ be prefixes reaching a common head state:
\begin{equation*}
\state \defeq \extdelta(\initstate_\head,\str_1) = \extdelta(\initstate_\head,\str_2).
\end{equation*}
Then for every $\str' \in \inset^{*}$ and every $\insym \in \inset$, the head produces the same output on the inputs $\str_1\str'\insym$ and $\str_2\str'\insym$.
\end{proposition}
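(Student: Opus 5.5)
The plan is a direct induction on the length of the suffix $\str'$, using only that the head state evolves by a deterministic transition function and that the output at each position depends only on the current state and the current input.

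Let $\extdelta$ denote the extension of $\recfn_\head$ to words, defined by $\extdelta(\state,\emptyword)=\state$ and $\extdelta(\state,\str\insym)=\recfn_\insym^\head(\extdelta(\state,\str))$.

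The first step is to record the semantics of the head output. On input $u\insym$, the head reads the state reached after processing the whole prefix including $\insym$ (causal masking lets the query at position $\tstep$ attend to $i\le\tstep$). It then outputs $\outfn_\head(\extdelta(\initstate_\head,u\insym),\insym)$.

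If the convention instead excludes the current position, as in the displayed attention formula with $i<\tstep$, the output is $\outfn_\head(\extdelta(\initstate_\head,u),\insym)$. The argument is identical in that case, so I will state which convention is used and note that it does not matter.

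The second step is the key identity
\[
\extdelta(\initstate_\head,\str_i\str'\insym)=\extdelta\bigl(\extdelta(\initstate_\head,\str_i),\str'\insym\bigr),
\]
proved by induction on $|\str'\insym|$ from the recursive definition of $\extdelta$. This is the right action property $\state\cdot(st)=(\state\cdot s)\cdot t$ of the transition semigroup $\transsem_\head$ acting on $\states_\head$.

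Substituting the hypothesis $\extdelta(\initstate_\head,\str_1)=\extdelta(\initstate_\head,\str_2)=\state$ gives $\extdelta(\initstate_\head,\str_1\str'\insym)=\extdelta(\state,\str'\insym)=\extdelta(\initstate_\head,\str_2\str'\insym)$. The same holds without the final $\insym$.

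The last step is to apply $\outfn_\head(\cdot,\insym)=\rho_{\qryop(\insym)}(\cdot(\qryop(\insym)))$ to these equal states. This readout is a function of the state and $\insym$ alone, so the two outputs coincide.

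I expect no genuine obstacle. The only point needing care is that the transition $\recfn^\head_\insym$ is a well-defined function of the current state and $\insym$ alone. That is, every query accumulator is updated by $\delta_\query$ under the fixed left-to-right finite-precision semantics (assumptions \textbf{F} and \textbf{E}), and the update does not depend on the unbounded history or on the query actually issued.

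This holds by construction of $\Delta_{(\key,\val)}$, which updates all components $\state(\query)$ for $\query\in\Queries$ and depends only on $\keyop(\insym)$ and $\valop(\insym)$. The absence of positional embeddings (assumption \textbf{P}) is what guarantees that $\qryop$, $\keyop$ and $\valop$ depend on $\insym$ and not on the position. I will make that explicit, because otherwise the position index would be hidden state.
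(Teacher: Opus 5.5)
Your proposal is correct and follows essentially the same route as the paper. Both prove $\extdelta(\state',uv)=\extdelta(\extdelta(\state',u),v)$ by induction on $|v|$, substitute the common state $\state$, and apply a readout that depends only on the state and $\insym$; the paper uses the strict-prefix (Mealy) convention, with output $\outfn_\head(\extdelta(\initstate_\head,\str_i\str'),\insym)$, which is the second of the two cases you cover.
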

The proposition says that the state $\extdelta(\initstate_\head,\str)$ is therefore a sufficient statistic of the prefix $\str$ for the entire future behavior of the head.
\begin{proof}
The extended transition satisfies the identity
\begin{equation}\label{eq:cocycle}
\extdelta(\state',uv) = \extdelta\bigl(\extdelta(\state',u),v\bigr),
\qquad \state' \in \states_\head,\ u,v \in \inset^{*}.
\end{equation}
The verification is by induction on $|v|$. 
By \cref{def:transducer}, the output of the head on the input $\str_i\str'\insym$ is
$\outfn_\head\bigl(\extdelta(\initstate_\head,\str_i\str'),\insym\bigr)$.
By \cref{eq:cocycle},
\begin{equation*}
\extdelta(\initstate_\head,\str_i\str')
= \extdelta\bigl(\extdelta(\initstate_\head,\str_i),\str'\bigr)
= \extdelta(\state,\str'),
\qquad i \in \{1,2\},
\end{equation*}
an expression in which $\str_i$ no longer occurs.
Both outputs therefore equal $\outfn_\head\bigl(\extdelta(\state,\str'),\insym\bigr)$.
\end{proof}
By the defining equation of $\Delta_{(\key,\val)}$, coordinate $\query$ of the new state depends on coordinate $\query$ of the old state alone: every generator is a coordinatewise transformation of $A^{\Queries}$.
Coordinatewise transformations form a subsemigroup of $\states_\head^{\states_\head}$, canonically isomorphic to $\prod_{\query\in\Queries} A^{A}$.
The following proposition records this passage; it reduces every pseudovariety question about the head to $|\Queries|$ questions about one-accumulator semigroups.

\begin{proposition}[Subdirect decomposition over queries]\label{prop:head-subdirect}
For $\query \in \Queries$, let
$\transsem_\query \defeq \gensem{\,\delta_\query(-,(\keyop(\insym),\valop(\insym))) \mid \insym \in \inset\,} \subsem A^{A}$.
The coordinatewise transformations of $\states_\head$ form a subsemigroup of $\states_\head^{\states_\head}$, canonically isomorphic to $\prod_{\query \in \Queries} A^{A}$ via $\bigl(\state \cdot (f_\query)_{\query}\bigr)(\query) \defeq \state(\query) \cdot f_\query $.
Under this identification,
\begin{equation*}
\transsem_\head \subsem \prod_{\query \in \Queries} \transsem_\query ,
\end{equation*}
with every projection surjective.
In particular, if $\mathbf{V}$ is a pseudovariety of finite semigroups and $\transsem_\query \in \mathbf{V}$ for every $\query$, then $\transsem_\head \in \mathbf{V}$.
\end{proposition}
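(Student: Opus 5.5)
The argument has three parts: identify the coordinatewise transformations with the product, check that the head semigroup lands in the product of the per-query semigroups with surjective projections, and then use closure of pseudovarieties under the relevant operations.

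\textbf{Coordinatewise transformations.} Let $C \subseteq \states_\head^{\states_\head}$ be the set of maps $F$ for which there is a family $(f_\query)_{\query}$ with $f_\query \in A^A$ and $F(\state)(\query) = f_\query(\state(\query))$ for all $\state$ and $\query$. Define $\iota \colon \prod_{\query} A^A \to C$ by $(f_\query)_\query \mapsto F$.
\begin{enumerate*}[label=(\roman*)]
\item \emph{Injectivity.} The family is recovered from $F$: for each $a \in A$, evaluate $F$ on any state $\state$ with $\state(\query)=a$ and read coordinate $\query$; such a state exists because $A$ is nonempty.
\item \emph{Surjectivity.} This holds by the definition of $C$.
\item \emph{Composition.} With the right-action convention, applying $(f_\query)$ and then $(g_\query)$ gives coordinate $\query$ equal to $\state(\query)\cdot f_\query g_\query$. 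So $\iota$ is a semigroup morphism, and $C$ is closed under composition. Hence $C$ is a subsemigroup and $\iota$ is an isomorphism.
\end{enumerate*}

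\textbf{Landing in $\prod_\query \transsem_\query$.} By the defining equation of $\Delta_{(\key,\val)}$, each generator $\recfn^\head_\insym$ lies in $C$ and corresponds to the family $\bigl(\delta_\query(-,(\keyop(\insym),\valop(\insym)))\bigr)_\query$. Each entry of this family is a generator of $\transsem_\query$. So every generator of $\transsem_\head$ lies in $\iota\bigl(\prod_\query \transsem_\query\bigr)$. This image is a subsemigroup, so it contains all of $\transsem_\head$. For surjectivity of the projections, $\pi_\query$ composed with $\iota^{-1}$ is a morphism $\transsem_\head \to \transsem_\query$. Its image is a subsemigroup containing every generator of $\transsem_\query$, so it is all of $\transsem_\query$.

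\textbf{Pseudovariety consequence.} Pseudovarieties of finite semigroups are closed under finite direct products and under subsemigroups. Here $\Queries$ is finite and each $\transsem_\query$ is finite, so $\prod_\query \transsem_\query \in \mathbf{V}$. Then $\transsem_\head$, being isomorphic to a subsemigroup of this product, also lies in $\mathbf{V}$, since pseudovarieties are closed under isomorphism (a special case of division).

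The only point needing care is the well-definedness of $\iota^{-1}$, that is, injectivity, which relies on $A$ being nonempty. It also requires the composition convention to match the paper's right-action notation. Everything else is routine generator-chasing.
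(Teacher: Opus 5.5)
Your proof is correct and follows the same route as the paper's proof. You identify coordinatewise maps with $\prod_{\query} A^{A}$, using evaluation for injectivity and componentwise composition for the morphism property. You observe that each generator $\recfn^\head_\insym$ corresponds to the tuple of generators of the $\transsem_\query$, get surjective projections because they carry generators onto generators, and conclude by closure of $\mathbf{V}$ under finite products and subsemigroups. The only difference is that you spell out steps the paper leaves implicit, such as why the image of the product is a subsemigroup and why the projections are morphisms.
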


\begin{proof}
The displayed map is an injective semigroup morphism onto the coordinatewise transformations: coordinatewise maps compose componentwise, and such a map determines its components by evaluation.
By definition of $\Delta_{(\key,\val)}$, each generator $\recfn^\head_\insym$ is coordinatewise, with $\query$-th component the generator $\delta_\query(-,(\keyop(\insym),\valop(\insym)))$ of $\transsem_\query$.
Hence $\transsem_\head$ lies in the image of $\prod_\query \transsem_\query$, projections carry generators onto generators, and $\mathbf{V}$ is closed under finite products and subsemigroups.
\end{proof}

\section{From Heads to Language Acceptors}
\begin{definition}[Transformer core]\label{def:transformer-core}
A \defn{transformer core} with heads $\head_1,\ldots,\head_m$ over a common input set $\inset$ is the tuple $\core=\coretuple$ with
\begin{align}
\states &\defeq \states_{\head_1}\times\cdots\times\states_{\head_m},\\
\recfn_\insym(\state_1,\ldots,\state_m) &\defeq \bigl(\recfn^{\head_1}_\insym(\state_1),\ldots,\recfn^{\head_m}_\insym(\state_m)\bigr),\\
\outfn\bigl((\state_1,\ldots,\state_m),\insym\bigr) &\defeq \mixfn\bigl(\outfn_{\head_1}(\state_1,\insym),\ldots,\outfn_{\head_m}(\state_m,\insym)\bigr),
\end{align}
where $\mixfn\colon \outset_1\times\cdots\times\outset_m\to\outset$ is a \defn{mixing map}, abstracting the output projection, normalization, and feedforward block.
\end{definition}
By \cref{def:parallel-composition}, the transition semigroup of a core is a subsemigroup of the parallel composition of its head semigroups,
$\semigroup_\core \defeq \gensem{\recfn_\insym \mid \insym\in\inset} \subsem \transsem_{\head_1}\times\cdots\times\transsem_{\head_m}$,
with every projection carrying generators onto generators; the mixing map contributes quotients of readouts and no transitions.

A transformer core is an algebraic core in the sense of \citet[Def.~3.1]{nowak2026an}: the sets are finite and both maps are total.
An \defn{algebraic transformer} $\transformer$ of depth $\nlayers$ is an algebraic RNN in the sense of \citet[\S3]{nowak2026an} all of whose cores are transformer cores. 
We keep the nomenclature of that paper unchanged: wiring maps, input and output types $\insettype$ and $\outsettype$, and cascade juxtaposition $\cascadewith{\wiremap}$ are used here without restatement.
An \defn{accepting core} $\acccore$ is a transformer core with output set $\{0,1\}$ whose readout depends on the state alone, $\outfn_\bullet(\state,\insym)=\charfun{\accreg}(\state)$ for an accepting region $\accreg\subseteq\states_\bullet$, intermediate readouts are Mealy, acceptance is Moore.
A \defn{transformer acceptor} is a tuple $\acceptor=(\alphabet,\augtransformer,\initstate,\enc)$ with $\augtransformer=\transformer\cascadewith{\wiremap_\nlayers}\acccore$, where $\transformer$ is an algebraic transformer, $\acccore$ is an accepting core, and with an initial global state $\initstate$, and an encoder $\enc\colon\alphabet\to\insettype(\transformer)$. It recognizes the language
\begin{equation}
\Lang{\acceptor}\defeq\{\str\in\alphabet^*\mid \projlayer{\bullet}(\extF{\augtransformer}(\initstate,\str))\in\accreg\},
\end{equation}
    where $\projlayer{\bullet}\colon \states_{\augtransformer}\to \states_\bullet$ projects to the accepting-core state and the empty word is acting as $\identity{\states_{\augtransformer}}$. Here, for a given word $\str=\str_0\cdots\str_{|\str|}\in\alphabet^*$, $\extF{\augtransformer}(\initstate,\str)$ denotes the global state obtained by iterating $\augtransformer$ on the encoding $\enc(\str_0)\cdots\enc(\str_{|\str|})$ starting from state $\initstate$.

Write $\wreathsem_{\augtransformer}$ for the ambient wreath product of the core transformation semigroups of $\augtransformer$, and $\wreathsem_{\augtransformer}^{T}$, $\semigroup_{\augtransformer}^{T}$ for the realized wreath semigroup and the realized transition semigroup relative to a set $T\subseteq\insettype(\transformer)$ of admissible inputs, defined through the layer-input dependency maps exactly as in \citet[\S3]{nowak2026an}.

\begin{proposition}[Factorization]\label{prop:transformer-factorization}
For every algebraic transformer $\transformer$ and every $T\subseteq\insettype(\transformer)$,
\begin{equation}
\semigroup_{\transformer}^{T}
\subsem
\wreathsem_{\transformer}^{T}
\subsem
\wreathsem_{\transformer}.
\end{equation}
\end{proposition}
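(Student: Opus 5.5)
The plan is to observe that an algebraic transformer is, by definition, an algebraic RNN in the sense of \citet[\S3]{nowak2026an} whose cores happen to be transformer cores. The corresponding factorization statement for general algebraic RNNs there should therefore apply verbatim. What remains is to check that nothing specific to transformer cores breaks the hypotheses of that result.

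Concretely, I would proceed in three steps. First, recall the construction of $\wreathsem_{\transformer}$ as the iterated left wreath product $(\semigroup_{\core_1},\states_1)\wr\cdots\wr(\semigroup_{\core_\nlayers},\states_\nlayers)$ of the core transformation semigroups. Each $\semigroup_{\core_\layer}$ is the subsemigroup of $\transsem_{\head_1}\times\cdots\times\transsem_{\head_m}$ noted after \cref{def:transformer-core}. It is a finite transformation semigroup, since the sets are finite and the maps are total, so the wreath product in \cref{def:wreath-product} is well defined.

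Second, for an admissible input $\insym\in T$, unfold one step of the global transition layer by layer. Layer $\layer$ receives, through the wiring map $\wiremap_{\layer-1}$, an input that depends on the current input and on the readouts of earlier layers. Those readouts are functions of the earlier layers' states \emph{before} their update, because the readout $\outfn(\state,\insym)$ is Mealy on the pre-update state. Hence the transition applied at layer $\layer$ is $\recfn^{(\layer)}_{x_\layer}$ with $x_\layer$ a function of $(\state_1,\ldots,\state_{\layer-1})$ and $\insym$. This is precisely an element $(s_1,\phi)$ of the left wreath product, with $\phi$ evaluated at the first component's state before updating. The set of such elements as $\insym$ ranges over $T$ generates $\wreathsem^{T}_{\transformer}$, giving the inclusion $\wreathsem^{T}_{\transformer}\subsem\wreathsem_{\transformer}$. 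This inclusion is immediate, since the generators lie in the ambient product and the product is closed under multiplication.

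Third, $\semigroup^{T}_{\transformer}$ is generated by the global transitions $F_\insym$, $\insym\in T$, acting on $\states_1\times\cdots\times\states_\nlayers$. Each $F_\insym$ coincides with the action of the corresponding wreath element constructed in the second step, so $\semigroup^{T}_{\transformer}$ is the image of $\wreathsem^{T}_{\transformer}$ under the faithful-action quotient. The inclusion sign in the statement is then read as in \citet{nowak2026an}: as an embedding up to this canonical identification, or, if $\semigroup^{T}$ is defined as the transformation semigroup of the wreath action, literally. The main obstacle is exactly this bookkeeping of conventions. I would check that the realized semigroups are defined via the same layer-input dependency maps, so that the identification is on the nose. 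I would also confirm that the mixing map $\mixfn$ only enters through readouts feeding wiring maps, and contributes no transitions, as already remarked after \cref{def:transformer-core}. Once these points are aligned, the result is a direct citation of the general algebraic-RNN factorization.
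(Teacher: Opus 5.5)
Your route is the paper's: its proof cites the factorization lemma of \citet[Lem.~3.12]{nowak2026an}, specialized to transformer cores. Your Step 2 is a correct unpacking of that lemma; in particular, Mealy readouts on pre-update states do match the left-wreath convention of \cref{def:wreath-product}.

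\textbf{The gap.} What you defer as ``bookkeeping of conventions'' is exactly the one substantive sentence of the paper's proof, and you do not supply it. In Step 3 you obtain $\semigroup_{\transformer}^{T}$ as the image of $\wreathsem_{\transformer}^{T}$ under the action morphism $\wreathsem_{\transformer}\to\states^{\states}$. An image under a morphism is a quotient, so as written you only get $\semigroup_{\transformer}^{T}\divides\wreathsem_{\transformer}$. Calling this an embedding ``up to canonical identification'' presupposes that the action morphism is injective. Neither of your two readings avoids this. \cref{def:wreath-product} takes the formal set $\semigroup_1\times\semigroup_2^{\states_1}$ as underlying set, so $\semigroup_{\transformer}^{T}$, a set of maps on $\states$, lies inside $\wreathsem_{\transformer}^{T}$ only through that morphism.

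\textbf{The missing fact.} The wreath product of faithful transformation semigroups is faithful. Each core semigroup $\semigroup_{\core_\layer}\subsem\states_\layer^{\states_\layer}$ is faithful by definition. Suppose $(s,\phi)$ and $(s',\phi')$ induce the same map on $\states_1\times\states_2$.
\begin{itemize}
\item Comparing first coordinates gives $\state_1\cdot s=\state_1\cdot s'$ for all $\state_1$, hence $s=s'$.
\item Comparing second coordinates at a fixed $\state_1$ gives $\state_2\cdot\phi(\state_1)=\state_2\cdot\phi'(\state_1)$ for all $\state_2$, hence $\phi(\state_1)=\phi'(\state_1)$.
\end{itemize}
Iterating over the layers shows that the action morphism is injective on $\wreathsem_{\transformer}$. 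The surjection $\wreathsem_{\transformer}^{T}\to\semigroup_{\transformer}^{T}$ of your Step 3 is therefore a bijection, and its inverse places $\semigroup_{\transformer}^{T}$ inside $\wreathsem_{\transformer}^{T}$ on the nose. The downstream theorems would survive with division alone, but the proposition asserts $\subsem$, and this is what justifies it.

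\textbf{A smaller omission.} The result of \citet{nowak2026an} is stated for monoids, so ``applies verbatim'' needs the transfer argument of \cref{rem:semigroup-transfer}. Your Steps 2--3 already contain it implicitly, since they use only generators indexed by $\insym\in T$ and never the empty word. You should still say so, because by \cref{rem:why-semigroups} the monoid/semigroup distinction is not cosmetic in this paper.
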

\begin{proof}
This is the factorization lemma of \citet[Lem.~3.12]{nowak2026an}, specialized to transformer cores; the proofs transfer to semigroups by \cref{rem:semigroup-transfer}.
The first relation is an inclusion, not merely a division: the wreath product of faithful transformation semigroups is faithful.
\end{proof}

\begin{proposition}[Divisibility]\label{prop:transformer-acceptor-divisibility}
For every transformer acceptor $\acceptor=(\alphabet,\augtransformer,\initstate,\enc)$,
\begin{equation}
\synsem{\Lang{\acceptor}}
\divides
\semigroup_{\augtransformer}^{\enc(\alphabet)}
\subsem
\wreathsem_{\augtransformer}^{\enc(\alphabet)}
\subsem
\wreathsem_{\augtransformer}.
\end{equation}
\end{proposition}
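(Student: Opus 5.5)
The two inclusions on the right are \cref{prop:transformer-factorization} applied to $\augtransformer$ with admissible input set $T=\enc(\alphabet)$. This is legitimate because $\augtransformer=\transformer\cascadewith{\wiremap_\nlayers}\acccore$ is itself an algebraic transformer: the accepting core is a transformer core, and cascading just appends one more layer. So the only new content is the division $\synsem{\Lang{\acceptor}}\divides\semigroup_{\augtransformer}^{\enc(\alphabet)}$. I will prove it by the standard route: a language recognized by a finite transition system through an accepting set has its syntactic semigroup dividing the transition semigroup of that system.

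First I set up a recognizing morphism. The encoder $\enc\colon\alphabet\to\insettype(\transformer)$ sends each letter $\sym$ to the global transition $F_{\enc(\sym)}$ of $\augtransformer$. By definition of the realized transition semigroup relative to $T=\enc(\alphabet)$, these transitions generate $\semigroup_{\augtransformer}^{\enc(\alphabet)}$. They therefore extend to a surjective semigroup morphism
\[
\eta\colon\alphabet^+\twoheadrightarrow\semigroup_{\augtransformer}^{\enc(\alphabet)},\qquad \eta(\str)=F_{\enc(\str_0)}\cdots F_{\enc(\str_{|\str|})}.
\]
Iterating the cascade gives $\extF{\augtransformer}(\initstate,\str)=\initstate\cdot\eta(\str)$; this is the cocycle identity \cref{eq:cocycle}, lifted from heads to the whole cascade. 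Hence for nonempty $\str$,
\[
\str\in\Lang{\acceptor}\iff \initstate\cdot\eta(\str)\in\projlayer{\bullet}^{-1}(\accreg).
\]
So $\Lang{\acceptor}\cap\alphabet^+=\eta^{-1}(P)$, where $P=\{s\mid \initstate\cdot s\in\projlayer{\bullet}^{-1}(\accreg)\}$.

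Next I show $\eta$ saturates the syntactic congruence. Suppose $\eta(u)=\eta(v)$. Then for all $x,y\in\alphabet^*$ we get $\initstate\cdot\eta(x)\eta(u)\eta(y)=\initstate\cdot\eta(x)\eta(v)\eta(y)$, where empty factors act as the identity. Hence $xuy\in\Lang{\acceptor}\iff xvy\in\Lang{\acceptor}$, i.e.\ $u\syncong{\Lang{\acceptor}}v$. Thus $\ker\eta\subseteq{\syncong{\Lang{\acceptor}}}$. The quotient map $\alphabet^+\to\synsem{\Lang{\acceptor}}$ therefore factors through $\eta$, giving a surjection $\semigroup_{\augtransformer}^{\enc(\alphabet)}\twoheadrightarrow\synsem{\Lang{\acceptor}}$. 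This is a division with $T$ the whole semigroup in \cref{def:division}.

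I expect the main obstacle to be bookkeeping rather than an idea. I must check that the realized transition semigroup really is generated by the images of $\enc(\alphabet)$ under the layer-input dependency maps of \citet{nowak2026an}, so that $\eta$ lands in it surjectively. I also must handle the convention that acceptance is Moore: the accepting core reads the state after the last input. Under the left-wreath convention, the lower layers' readouts feeding the accepting core are evaluated at the pre-update state, and I need to confirm that $\extF{\augtransformer}(\initstate,\str)=\initstate\cdot\eta(\str)$ holds with exactly these conventions.

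The empty word is harmless, since $\synsem{\cdot}$ is defined over $\alphabet^+$. If the paper's syntactic semigroup instead includes $\emptyword$, I would pass to monoids and adjoin the identity $\identity{\states_{\augtransformer}}$, which acts as the image of $\emptyword$ by the stated convention.
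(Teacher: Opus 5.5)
Your proposal is correct and follows the paper's proof: the paper likewise takes the morphism $\str\mapsto\extF{\augtransformer}(-,\str)$ from \cref{prop:eta-morphism}, observes that it recognizes $\Lang{\acceptor}\cap\alphabet^{+}$ through the accepting region, and obtains the two inclusions from \cref{prop:transformer-factorization} applied to $\augtransformer$. The only difference is that where the paper cites \cref{thm:syntactic-semigroup}, you prove the division inline by showing $\ker\eta$ is contained in the syntactic congruence; together with surjectivity of $\eta$, this even exhibits $\synsem{\Lang{\acceptor}}$ as a quotient of the whole realized semigroup.
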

\begin{proof}
By \cref{prop:eta-morphism}, $\str\mapsto\extF{\augtransformer}(-,\str)$ restricts to a semigroup morphism $\alphabet^{+}\to\semigroup_{\augtransformer}^{\enc(\alphabet)}$ recognizing $\Lang{\acceptor}\cap\alphabet^{+}$ through the accepting region, \cref{thm:syntactic-semigroup} gives the division, and \cref{prop:transformer-factorization} the two inclusions.
This is the acceptor divisibility chain of \citet{nowak2026an}, at semigroup level.
\end{proof}

\begin{remark}[Transfer of the proofs]\label{rem:semigroup-transfer}
The results of \citet[\S3]{nowak2026an} are stated for monoids and are used here for semigroups.
Their proofs transfer verbatim, for a structural reason: every object of {\em loc. cit.} section is defined by generators and closure, the generating sets are indexed by $\inset$ and never by $\inset\cup\{\emptyword\}$, and the identity occurs throughout as an inhabitant of a generated object, never as a hypothesis of an argument.
\end{remark}

\paragraph{A Note on Positional Embeddings}

With position embeddings, the encoder reads the position index alongside the symbol, $\enc \colon \alphabet \times \N \to \insettype(\transformer)$, and the set of realizable inputs remains finite under \textbf{F}.
What is lost is not finiteness but the morphism: the position-tagging map
\begin{equation}
\alphabet^+ \longrightarrow (\alphabet \times \N)^+,\,
\str \mapsto \bigl((\str_1,1),\ldots,(\str_{|\str|},|\str|)\bigr),
\end{equation}
does not commute with concatenation, since the same symbol at different positions has different images.
No morphism $\eta \colon \alphabet^+ \to \transsem$ is induced, \cref{def:semigroup-recognition} does not apply, and \cref{thm:syntactic-semigroup} has nothing to act on; in particular, the divisibility chain of \cref{prop:transformer-acceptor-divisibility} is not available.

\subsection{Expressivity Consequences}
The divisibility chain of \cref{prop:transformer-acceptor-divisibility} reduces transformer expressivity to a question of finite algebra.
We first fix a class of algebraic transformer acceptors by specifying the admissible attention mechanisms, finite-precision arithmetic, cores, wirings, encoders, and accepting cores.
\cref{prop:transformer-acceptor-divisibility} then shows that every recognized language has a syntactic semigroup dividing a realized wreath semigroup obtainable from the layer transition semigroups of the architecture.

This gives the following two-sided recipe, directly parallel to the algebraic RNN analysis of \citet{nowak2026an}:

\paragraph{Impossibility.}
For an impossibility result, it suffices to show that $\synsem{\lang}$ cannot divide any realized wreath semigroup generated by the permitted cores and wirings.
Equivalently, if a group, reset behavior, or other finite-semigroup component required by the syntactic semigroup of $\lang$ cannot occur in the wreath closure of the available core semigroups, then no acceptor in the class recognizes $\lang$.
Krohn--Rhodes theory \citep{krohn-rhodes-1965-algebraic}, and its extensions \citep{STIFFLER1973159}, supply one standard way to make this obstruction explicit, by decomposing finite transformation semigroups into simple interpretable basic components.

\paragraph{Construction witness.}
Conversely, to prove that a language is recognizable, one constructs transformer cores whose realized transition semigroups provide the necessary algebraic components, wires them so that the resulting realized wreath semigroup contains $\synsem{\lang}$ as a divisor, and chooses the accepting region $\accreg$ to separate the accepting syntactic classes.
The accepting core matters in this last step: the syntactic semigroup of $\Lang{\acceptor}$ need only divide the augmented semigroup $\semigroup_{\augtransformer}^{\enc(\alphabet)}$, not the transition semigroup of the base transformer alone.

For NoPE transformer acceptors, this yields a simple method to establish expressivity bounds: attention semantics determine the possible head semigroups, parallel heads determine core semigroups, and core cascades determine the realized wreath semigroups against which $\synsem{\lang}$ must be tested.\footnote{Position embeddings require separate treatment, because having access to unbounded position information allows a transformer to recognize non-regular languages just via its encoder.}\looseness=-1

\section{Expressivity of NoPE Transformers}\label{sec:case-study}
In this section, we derive concrete expressivity results for transformers without position embeddings (full specification in \cref{app:nope-architecture}).\looseness=-1

\subsection{Attention types}

We start by formally defining different types of attention similar to those used in practice.
The following attention mechanisms differ only in the accumulator update and readout in \cref{eq:streaming-attention-interface};
hence, each of them is a streaming attention mechanism in the sense fixed above.

\begin{definition}[Soft attention]\label{def:streaming-soft-attention}
    A head uses \defn{soft attention} if
    $a_{\layer,i}\in A$ consists of numerator--denominator pairs
    $(n_\query,d_\query)$, the initial accumulator is $a_{\layer,0}=(0,0)$, and its update and readout are given by \cref{eq:att-rec} and \cref{eq:att-out}, respectively.
    with $\similarity$ a (scaled) dot product, computed with fixed evaluation order under floating-point semantics.
    The readout at the empty prefix is fixed to be 0.
\end{definition}

\begin{definition}[Sharp soft attention]
\label{def:soft-attention-no-absorption}
    A soft-attention head is \defn{sharp} if, on every reachable accumulator update,
    either the denominator changes, or both the numerator and the denominator stay the same.
    I.e., if an attention score is small enough to not affect the normalizer, it is treated as zero.

    Writing $w=\exp(\similarity(\query,\key))$,
    $\widetilde n=\operatorname{fl}(n_\query+w\val)$, and
    $\widetilde d=\operatorname{fl}(d_\query+w)$, its update is
    \begin{equation}
        \delta_{\layer,i,\query}
        ((n_\query,d_\query),(\key,\val))
        =
        \begin{cases}
            (\widetilde n,\widetilde d),
            &\text{if }\widetilde d>d_\query,\\
            (n_\query,d_\query),&\text{otherwise}.
        \end{cases}
    \end{equation}
\end{definition}

\begin{definition}[Sliding-window attention of width one]
\label{def:streaming-window-one-attention}
    Fix an initial accumulator
    $a^\circ_{\layer,i,\query}
    =(\key^\circ_{\layer,i},\val^\circ_{\layer,i})
    \in\Keys_{\layer,i}\times\Values_{\layer,i}$.
    A head uses \defn{sliding-window attention of width one} if
    $A_{\layer,i}=\Keys_{\layer,i}\times\Values_{\layer,i}$ and
    \begin{equation}
        \delta_{\layer,i,\query}(a,(\key,\val))=(\key,\val),
        \qquad
        \rho_{\layer,i,\query}(\key,\val)=\val.
    \end{equation}
    Thus, at position $\tstep$, the head reads only the value stored at
    position $\tstep-1$, with $\val^\circ_{\layer,i}$ read at the first
    position.  
\end{definition}

\subsection{Available Semigroups by Attention Type}
\begin{definition}\label{def:definite}
    A semigroup $\semigroup$ is \defn{definite}, $\semigroup\in\Definite$, if
    there exists $n \geq 1$ such that $st=t$ for all
    $s\in\semigroup$ and $t\in\semigroup^n$.
\end{definition}
\begin{definition}\label{def:rtrivial}
    A semigroup $\semigroup$ is \defn{$\Rrel$-trivial},
    $\semigroup\in\Rtrivial$, if
    $s\semigroup^1=t\semigroup^1$ implies $s=t$ for all
    $s,t\in\semigroup$.
\end{definition}
\begin{definition}\label{def:locrtrivial}
    A semigroup $\semigroup$ is \defn{locally $\Rrel$-trivial},
    $\semigroup\in\LocallyR$, if $e\semigroup e$ is
    $\Rrel$-trivial for every idempotent $e\in\semigroup$.
\end{definition}
\begin{definition}\label{def:aperiodic}
    A semigroup $\semigroup$ is \defn{aperiodic},
    $\semigroup\in\Aperiodic$, if for every $s\in\semigroup$ there
    exists $n\geq 1$ such that $s^n=s^{n+1}$.
\end{definition}

\begin{figure}[ht]
    \centering
    \includegraphics[width=.85\linewidth, clip, trim=1.cm 2.2cm 8cm 0cm]{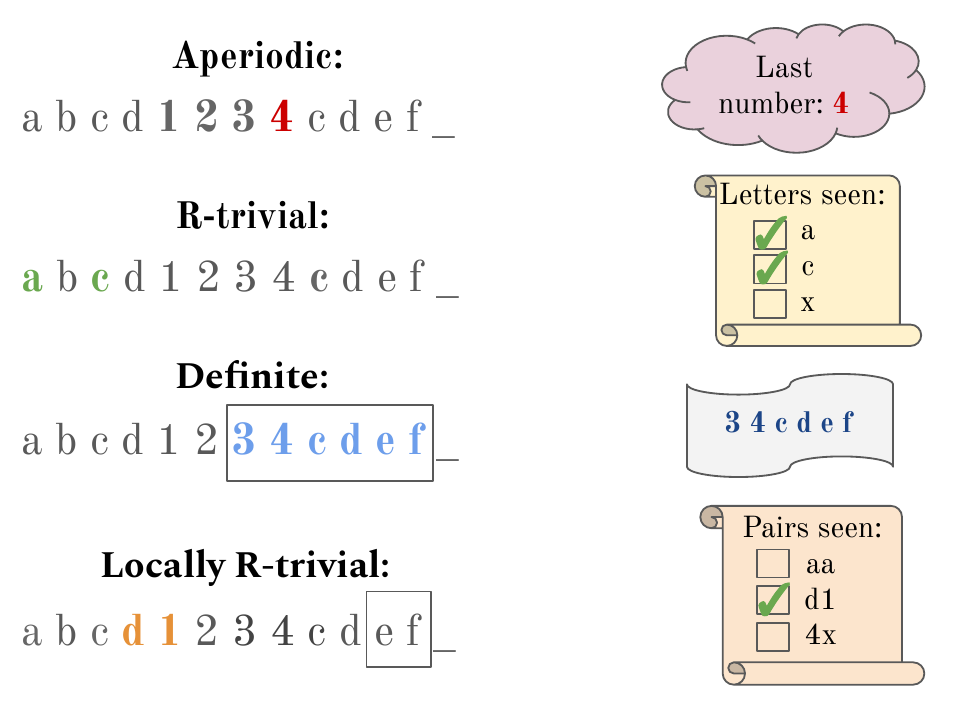}
    \caption{Illustration of the types of tasks that attention can perform depending on the variety of its transition semigroup.}
    \label{fig:semigroups}
\end{figure}

For intuition about the kinds of memory supported by these semigroup varieties, see \cref{fig:semigroups}. 
Aperiodic transition semigroups may retain finite-state information from arbitrarily far back, including ordering information; for example, they can record the last number seen.
In an $\Rrel$-trivial transition semigroup, state transformations progress without nontrivial reversible cycles, giving memory that behaves like a checklist rather than a rewritable register. 
Definite semigroups capture bounded-suffix memory, such as retaining a fixed-size window of the input.
Combining definite and $\Rrel$-trivial behavior yields locally $\Rrel$-trivial computations, which can retain checklist information about patterns detected within bounded windows.\looseness=-1

\begin{remark}
    Any semigroup from the classes described above can be decomposed into a wreath product whose factors come from just three prime semigroups, called $\uone,\utwo,\uthree$ (see \cref{app:prime-semigroups} for their definitions). The relevant decomposition results due to \citet{STIFFLER1973159} are outlined in \cref{app:subregular-kr}.
\end{remark}

We use the following lemmas (full proofs in \cref{app:prime-semigroups}).
\begin{restatable}{lemma}{windowOneDefinite}
\label{lem:window-one-definite}
    Every width-one sliding-window attention head has a definite
    transition semigroup, i.e., $\transsem_\head\in\Definite$.
\end{restatable}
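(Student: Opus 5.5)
The plan is to reduce the statement to a single accumulator via \cref{prop:head-subdirect}, and then show directly that each single-accumulator semigroup $\transsem_\query$ is definite with $n=1$. Since $\Definite$ is a pseudovariety of finite semigroups, \cref{prop:head-subdirect} then gives $\transsem_\head\in\Definite$. As a sanity check, I will also verify the claim directly on $\states_\head = A^\Queries$, because the argument is equally short there.

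\textbf{Step 1: the generators are constant maps.} Fix a query $\query$ and write $A=\Keys\times\Values$. By \cref{def:streaming-window-one-attention}, reading an input $\insym$ maps every accumulator $a\in A$ to $(\keyop(\insym),\valop(\insym))$. So the generator $\delta_\query(-,(\keyop(\insym),\valop(\insym)))$ of $\transsem_\query$ is the constant map $c_{\insym}$ with value $(\keyop(\insym),\valop(\insym))$.

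At the level of the whole head, the update does not depend on $\query$. Hence $\recfn^\head_\insym$ is the constant map $\states_\head\to\states_\head$ sending every state to the function $\query\mapsto(\keyop(\insym),\valop(\insym))$.

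\textbf{Step 2: the semigroup consists of constant maps and is left-zero.} Composing any map with a constant map applied afterwards gives a constant map. Since the action is on the right, $\state\cdot(st)=(\state\cdot s)\cdot t$, so in any product the last factor is applied last. Therefore every element of the generated semigroup is the constant map of its last generator, and the semigroup is a set of constant maps.

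For constant maps $s$ and $t$ we have $\state\cdot(st)=(\state\cdot s)\cdot t=\state\cdot t$, so $st=t$. Thus $st=t$ holds for all $s$ and all $t\in\semigroup=\semigroup^1$, which is \cref{def:definite} with $n=1$.

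\textbf{Main obstacle.} The only delicate point is the side convention. Definite semigroups are defined by $st=t$, meaning the right factor absorbs. This matches right actions, where the most recent input is applied last and determines the state. I will check this against the convention stated in the transformation-semigroup definition, where $\state\cdot(st)=(\state\cdot s)\cdot t$.

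A second point is that the initial accumulator $a^\circ$ plays no role in the transition semigroup. It only affects reachability and the readout at the first position, and \cref{prop:head-subdirect} does not depend on the initial state.
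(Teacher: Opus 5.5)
Your proof is correct and is essentially the paper's argument: every generator is a constant map, so every element of $\transsem_\head$ is the constant map of its last letter and $st=t$ for all $s,t$. The paper then takes $n=2$ in \cref{def:definite} where you take $n=1$; both are valid, provided $\semigroup^1$ is read as products of length one rather than as $\semigroup$ with an identity adjoined. Your detour through \cref{prop:head-subdirect} is harmless but unnecessary, since your direct check on $A^{\Queries}$ already suffices. One terminological slip: a semigroup satisfying $st=t$ is right-zero, not left-zero, though the identity you actually derive is the correct one.
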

\begin{proof}[proof sketch]
    Every nonempty input word acts as the constant map determined by its final key--value pair; hence, $st=t$, implying \cref{def:definite}.
\end{proof}

 For every ${i \in \{1,2,3\}}$, we define the transformation semigroup $\overline{U}_i \defeq (U_i, U_i^1)$ where $U_i$ (see \cref{def:theUs}) acts on $U_i^1$ from the right.
\begin{restatable}{lemma}{windowOneContainsUone}
\label{lem:window-one-contains-uone}
    There exists a width-one sliding-window attention head $\head$ such
    that $\barUone \divides (\transsem_\head,\states_\head)$.
\end{restatable}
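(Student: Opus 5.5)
The plan is to build the head explicitly, so that its transition semigroup is literally a copy of $U_1$ acting by resets. I read \cref{def:theUs} as defining $U_1=\{a,b\}$ with right-zero multiplication $st=t$, so that $\barUone=(U_1,\{1,a,b\})$ acts by $\state\cdot s=s$. This is the only reading compatible with \cref{lem:window-one-definite}: the monoid $\{0,1\}$ is not definite, since $0\cdot 1\neq 1$. Everything below is tailored to that reading.

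\emph{Construction.} Take a two-letter head input set $\inset=\{\insym_a,\insym_b\}$ and a single query, so that $\states_\head=A$. Use one key $\key_0$ and three distinct values $\val_a,\val_b,\val^\circ$. Set $\keyop\equiv\key_0$, $\valop(\insym_a)=\val_a$ and $\valop(\insym_b)=\val_b$. For the initial accumulator required by \cref{def:streaming-window-one-attention}, take $(\key^\circ,\val^\circ)=(\key_0,\val^\circ)$.

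\emph{Computing the transition semigroup.} By \cref{def:streaming-window-one-attention}, $\recfn_{\insym_a}$ is the constant map to $(\key_0,\val_a)$ and $\recfn_{\insym_b}$ is the constant map to $(\key_0,\val_b)$. A composite of constant maps is the constant map of the last factor applied. Hence $\transsem_\head=\{\recfn_{\insym_a},\recfn_{\insym_b}\}$. These two maps are distinct because $\val_a\neq\val_b$. Their multiplication is right-zero: with the paper's right-action convention $\state\cdot(st)=(\state\cdot s)\cdot t$, the product $st$ acts as $t$. So $\psi\colon\recfn_{\insym_a}\mapsto a$, $\recfn_{\insym_b}\mapsto b$ is a semigroup isomorphism $\transsem_\head\cong U_1$.

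\emph{Division of transformation semigroups.} I take the standard notion: a subset $Z'\subseteq\states_\head$ stable under the relevant elements, a surjection $\theta\colon Z'\to U_1^1$, and a surjective morphism $\psi$ from a subsemigroup onto $U_1$ such that $\theta(\state\cdot s)=\theta(\state)\cdot\psi(s)$. Let
\begin{equation*}
Z'=\{(\key_0,\val^\circ),(\key_0,\val_a),(\key_0,\val_b)\}
\end{equation*}
and let $\theta$ send these three states to $1,a,b$ respectively. $Z'$ is stable, since every generator lands in it. The equivariance check is immediate: for any $\state\in Z'$ we have $\theta(\state\cdot\recfn_{\insym_a})=a=\theta(\state)\cdot a$, and likewise for $b$.

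The main obstacle is not this computation but the conventions. First, I must confirm that \cref{def:theUs} indeed makes $U_1$ the right-zero (reset) semigroup. If it were instead the left-zero semigroup, I would need to pass through the reversal/right-action convention, which I expect the appendix fixes. Second, I must confirm which notion of division of transformation semigroups \cref{app:prime-semigroups} uses, so that the initial state correctly realises the adjoined identity $1$. Choosing a third value $\val^\circ$ for the initial accumulator handles the identity element, and I would make that choice explicitly.
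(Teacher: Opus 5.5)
Your proposal is correct and follows the paper's proof essentially verbatim. Both use a singleton key set, three distinct values with the extra one serving as the initial accumulator labelled $1$, two inputs acting as constant maps with right-zero products, and the equivariant surjections $\theta,\varphi$ on the invariant three-state set; your explicit single-query choice, giving $\states_\head=A$, is left implicit in the paper, and your convention worries resolve as you assumed, since \cref{def:theUs} gives $xy=y$ in $\uone$ and the paper uses exactly the invariant-subset notion of division you describe.
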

\begin{proof}[proof sketch]
    On the invariant set consisting of the initial accumulator and two distinct key--value states, the corresponding inputs act as the two constant maps of $\barUone$.
\end{proof}

\begin{restatable}{lemma}{sharpSoftRtrivial}
\label{lem:sharp-soft-r-trivial}
    Every sharp soft-attention head has an $\Rrel$-trivial transition
    semigroup, i.e., $\transsem_\head\in\Rtrivial$.
\end{restatable}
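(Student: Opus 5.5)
By \cref{prop:head-subdirect}, and since $\Rtrivial$ is a pseudovariety, it suffices to fix a query $\query\in\Queries$ and show that the one-accumulator semigroup $\transsem_\query\subsem A^A$ is $\Rrel$-trivial. The plan is to exhibit a partial order on $A$ that every generator respects in a strictly increasing way. Concretely, I would use the denominator coordinate: for $(n,d)\in A$ set $(n,d)\preceq(n',d')$ iff $d<d'$ or $(n,d)=(n',d')$. By \cref{def:soft-attention-no-absorption}, each generator $\delta_\query(-,(\key,\val))$ either fixes the state or strictly increases the denominator. Hence every generator $f$ is \emph{extensive}, meaning $a\preceq f(a)$ for all $a$, and whenever $f(a)\neq a$ the denominator has strictly grown. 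Both properties are preserved under composition, so every element of $\transsem_\query$ is extensive for $\preceq$.

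Next I would apply the standard criterion that a transformation semigroup of extensive maps on a finite poset is $\Rrel$-trivial. Suppose $s\semigroup^1=t\semigroup^1$ with $s\neq t$. Then $t=su$ and $s=tv$ for some $u,v\in\semigroup^1$, so $s=suv$, and by iteration $s=s(uv)^k$ for all $k$. Because $s\neq t$, the element $u$ is not the identity, and there is some state $a$ with $a\cdot s\neq a\cdot su$. Write $b=a\cdot s$. Since $b\cdot u\neq b$ and $u$ is extensive, $b\prec b\cdot u\preceq b\cdot uv$. But $b\cdot uv=a\cdot suv=a\cdot s=b$, which gives $b\prec b$. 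This contradicts antisymmetry of $\preceq$, because $\preceq$ is a strict order on denominators refined by equality.

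The step I expect to be the main obstacle is the hypothesis in the definition that sharpness holds only \emph{on every reachable accumulator update}. The argument above uses extensivity on all of $A$, so I would restrict each $\transsem_\query$ to the subset $A_{\mathrm{reach}}\subseteq A$ of accumulators reachable from $(0,0)$. This subset is invariant under the generators. The head semigroup relevant to \cref{prop:transformer-acceptor-divisibility} is then the one acting faithfully on reachable states, or alternatively one can take the update as literally defined by the case split $\widetilde d>d_\query$, which is extensive everywhere.

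A second point to check is that floating-point $\widetilde d$ is compared in the real order of the format. NaN and infinity are either excluded by finiteness of $A$ or treated as absorbing maximal elements, and in either case $\preceq$ is still a partial order. With these conventions the argument gives $\transsem_\query\in\Rtrivial$ for each $\query$, and therefore $\transsem_\head\in\Rtrivial$.
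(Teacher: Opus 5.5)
Your proof is correct and is essentially the paper's argument: monotone denominators plus sharpness make every transition extensive for a partial order, and a semigroup of extensive maps on a finite poset is $\Rrel$-trivial. The paper uses the reachability order on the whole head state $A^{\Queries}$ and cites this fact for partially ordered semiautomata, whereas you reduce to one query via \cref{prop:head-subdirect}, use an explicit denominator order, and prove the fact inline. Of your two ways of handling the ``reachable'' clause, use the second (reading the case split as the update everywhere): it is what the paper implicitly uses, while restricting to reachable accumulators would only show that a quotient of $\transsem_\query$ is $\Rrel$-trivial.
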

\begin{proof}[proof sketch]
    Denominators weakly increase, and sharpness makes any transition that leaves all denominators fixed a self-loop. Thus reachability is a partial order, so the transition semigroup is $\Rrel$-trivial.
\end{proof}

\begin{restatable}{lemma}{sharpSoftContainsUtwo}
\label{lem:sharp-soft-contains-utwo}
    There exists a sharp soft-attention head $\head$ such that
    $\barUtwo\divides(\transsem_\head,\states_\head)$.
\end{restatable}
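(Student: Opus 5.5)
The plan is to realize $\barUtwo$ as a restriction of a one-query sharp head, using two inputs. The first input acts as the identity. The second, iterated, resets every state to a saturated absorbing accumulator. I am relying on the reading (to be checked against \cref{def:theUs}) that $U_2$ is the two-element reset monoid: an identity $1$ and a right-absorbing reset $0$. It acts on a two-point state set by fixing both points, or by sending both to the reset point. If $U_2$ instead has a different generator shape, the same two building blocks should adapt: identity self-loops, and eventually-constant saturation. By \cref{prop:head-subdirect} it suffices to work with a single query, $|\Queries|=1$, so the head state is just one accumulator $(n,d)\in A$.

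\emph{Step 1: an identity letter.} Choose an input $x_1$ whose key gives a score $\similarity(\query,\key)$ so negative that $\exp\similarity(\query,\key)$ underflows to $0$ in the fixed floating-point format. Then $\operatorname{fl}(d+0)=d$ for every $d$, so the denominator never changes. By the sharpness clause of \cref{def:soft-attention-no-absorption}, $f_{x_1}$ is then the identity on all of $A$. Because the underflow clause works uniformly in the format, this avoids any case analysis over reachable states.

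\emph{Step 2: a saturating reset letter.} Choose an input $x_2$ with a positive weight $w=\exp\similarity(\query,\key)$ and some fixed value $\val$. Starting from $(0,0)$, the denominators $\operatorname{fl}(d+w)$ strictly increase as long as they change. Since $A$ is finite, the orbit reaches a denominator $d^\ast$ with $\operatorname{fl}(d^\ast+w)\le d^\ast$. At that point sharpness freezes the whole accumulator, at some state $z^\ast=(n^\ast,d^\ast)$. Here $d^\ast$ is either a genuine absorption value or, in an IEEE-style format, $+\infty$. If it is $+\infty$, I would check that $\infty+w=\infty$ is not "$>$", so the state still freezes. Hence there is $N$ such that $f_{x_2}^N$ sends the initial accumulator $z_0=(0,0)$ and $z^\ast$ both to $z^\ast$, and $f_{x_2}^N$ is idempotent on $\{z_0,z^\ast\}$.

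\emph{Step 3: assemble the division.} Let $T'=\{f_{x_1},f_{x_2}^N\}\subseteq\transsem_\head$. This is a subsemigroup, since $f_{x_1}$ is the identity and $f_{x_2}^N$ is idempotent on the relevant states. Also $\{z_0,z^\ast\}$ is invariant under $T'$. Restricting $T'$ to this set gives exactly the transformation semigroup $\{\mathrm{id},\text{const}_{z^\ast}\}$ on two points. This is $\barUtwo$ via $z_0\mapsto 1$, $z^\ast\mapsto 0$, $f_{x_1}\mapsto 1$, $f_{x_2}^N\mapsto 0$.

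The main obstacle is the bookkeeping of transformation-semigroup division, that is, the surjective morphism from a subsemigroup acting on an invariant subset. It must match the paper's convention that $U_2$ acts on $U_2^1$. If $U_2$ is not a monoid, $U_2^1$ has an extra adjoined identity point, and I would need a third reachable accumulator fixed by both elements. I would try to obtain it from $z_0$ by first applying a separate letter, itself saturated, in a different query coordinate, or by using a third input with a distinct value whose saturated state is left alone by $x_1$.
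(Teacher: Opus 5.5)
Your proof is correct. It differs from the paper's only in how the reset is realized; the identity letter is the same (an input whose weight underflows to $0$, which the sharpness clause turns into $\identity{A}$). The paper fixes weight $1$ and value $0$ and puts both states directly at the absorption threshold, $\state_0=(0,2^p-1)$ and $\state_1=(0,2^p)$. The explicit roundings $\operatorname{fl}(2^p-1+1)=2^p$ and $\operatorname{fl}(2^p+1)=2^p$ then make a single letter restrict to the constant map on this pair, so no powers appear. You instead pair the initial accumulator $(0,0)$ with whatever frozen state $\state^\ast$ the orbit of an arbitrary positive-weight input reaches, and take $f_{x_2}^N$ as the reset. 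This needs only finiteness of $A$ and monotonicity of rounding, so it does not depend on the specific format or on tie-breaking.

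Passing to a power does need one small repair. The set $\{f_{x_1},f_{x_2}^N\}$ need not be closed under composition in $\transsem_\head$, because $f_{x_2}^{2N}$ and $f_{x_2}^N$ may differ away from your invariant pair. For instance, an accumulator with denominator $-w$, if $A$ contains it, needs $N+1$ steps to freeze. There are two easy fixes. You can use the generated subsemigroup $\{\identity{A}\}\cup\{f_{x_2}^{kN}\mid k\geq 1\}$: each non-identity element restricts to the constant map onto $\state^\ast$, so restriction to $\{(0,0),\state^\ast\}$ is the required surjective morphism onto $\barUtwo$. Alternatively, replace $(0,0)$ by the penultimate orbit point $f_{x_2}^{N-1}(0,0)$; there $f_{x_2}$ itself already acts as the reset, and with weight $1$ and value $0$ this pair is exactly the paper's.

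Finally, your closing worry is unnecessary. The element $1$ is an identity of $\utwo$, so $\utwo^1=\utwo$, and a two-element invariant set suffices, exactly as in the paper.
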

\begin{proof}[proof sketch]
    On an invariant two-state set straddling an absorption threshold, an underflowing input acts as the identity and a unit-weight input as the constant map to the saturated state, realizing $\barUtwo$.
\end{proof}

\begin{restatable}{lemma}{softAttentionAperiodic}
\label{lem:soft-attention-aperiodic}
    Every finite-precision soft-attention head has an aperiodic transition semigroup.
\end{restatable}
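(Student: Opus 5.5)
By \cref{prop:head-subdirect}, and since $\Aperiodic$ is a pseudovariety, it suffices to show that each one-accumulator semigroup $\transsem_\query \subsem A^A$ is aperiodic. I will use the standard criterion for a finite transformation semigroup: $s \in A^A$ satisfies $s^n = s^{n+1}$ for some $n$ (e.g.\ $n = |A|$) if and only if no orbit of $s$ contains a cycle of length $\geq 2$. Equivalently, whenever $a \cdot s^k = a$ for some $k \geq 1$, we must have $a \cdot s = a$. So I fix an element $s \in \transsem_\query$, represented by a word $u = \insym_1 \cdots \insym_r$. I suppose $a_0 \cdot s = a_1, \ldots, a_{k-1} \cdot s = a_0$ is a cycle, and I aim to show that $a_0 = a_1$.

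The first step is monotonicity of the denominator. Each letter $\insym$ contributes a weight $w_\insym = \operatorname{fl}(\exp \similarity(\query, \keyop(\insym))) \geq 0$. This weight depends only on $\query$ and the input letter, not on the state. Under round-to-nearest (or any monotone rounding), $\operatorname{fl}(d + w) \geq d$ whenever $w \geq 0$. Hence the denominator is weakly increasing along every letter application. Along the cycle $a_0 \to a_1 \to \cdots \to a_0$, passing through all intermediate letter applications, the denominator therefore returns to its start and must be constant. So every intermediate denominator equals a single value $d^\ast$.

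The second step concerns the numerator, which is the real content of the proof because values can be negative. The numerator therefore really can move up and down, unlike in the sharp case of \cref{lem:sharp-soft-r-trivial}. I will argue coordinatewise on the numerator vector. For a fixed letter $\insym$ and value coordinate $j$, the update is $\varphi_{\insym,j}(x) = \operatorname{fl}(x + w_\insym \val_j)$, where the increment $w_\insym \val_j$ is itself a fixed floating-point number independent of the state. This map is weakly monotone in $x$, because rounding is monotone and $x \mapsto x + c$ is increasing. On the states of the cycle, the $j$-th numerator coordinate therefore transforms under $s$ via the fixed composite $\Phi_j = \varphi_{\insym_r,j} \circ \cdots \circ \varphi_{\insym_1,j}$. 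This is a monotone self-map of a finite totally ordered set of floats. Such a map has no cycles of length $\geq 2$. To see this, suppose $\Phi_j(x) > x$. Then by induction $\Phi_j^{m+1}(x) \geq \Phi_j^m(x)$, so the orbit is nondecreasing and cannot return to $x$. The case $\Phi_j(x) < x$ is symmetric. Hence on a cycle, $\Phi_j(x) = x$ for every coordinate $j$. Together with the constant denominator, this gives $a_1 = a_0$, as required.

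The main obstacle is the precise floating-point semantics. I must check that rounding is monotone, and that special values such as $\pm\infty$, signed zeros and NaN do not break the argument. I would handle these by fixing the convention that the arithmetic domain is a finite totally ordered set on which $\operatorname{fl}$ of addition is monotone in each argument. In this convention, overflow saturates to $\pm\infty$, and NaN is either excluded or treated as an absorbing maximal element, which cannot lie on a nontrivial cycle. I would also note that $\exp$ produces only nonnegative values, which is needed for the denominator monotonicity. With these conventions fixed, the two monotonicity steps above complete the proof.
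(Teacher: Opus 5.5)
Your proof is correct and takes essentially the same approach as the paper. Both arguments use two facts: the denominator weakly increases under every transition, and each scalar numerator coordinate evolves under an order-preserving self-map of a finite chain, which has no nontrivial cycles. The only differences are in presentation: you state aperiodicity as the absence of cycles of length at least two and reduce to $\transsem_\query$ via \cref{prop:head-subdirect}, whereas the paper argues directly through stabilization indices $s^{N_s}=s^{N_s+1}$.
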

\begin{proof}[proof sketch]
    Under repetition of any input word, each denominator weakly increases and eventually stabilizes; thereafter each numerator coordinate evolves by an order-preserving map of a finite chain and also stabilizes. Hence every transition has an idempotent power.
\end{proof}

\begin{restatable}{lemma}{softAttentionContainsUthree}
\label{lem:soft-attention-contains-uthree}
    There exists a soft-attention head $\head$ such that $\barUthree \divides (\transsem_\head,\states_\head)$, i.e., its transition semigroup is \emph{not} $\Rrel$-trivial.\looseness=-1
\end{restatable}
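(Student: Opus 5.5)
The plan is to exhibit, inside a single soft-attention accumulator, two inputs that act as two distinct constant maps on a two-element invariant set of accumulator states. Such a pair, say $a$ and $b$ with $ab=b$ and $ba=a$, generates a right-zero semigroup. Assuming that $\barUthree$ of \cref{def:theUs} is the two-element right-zero semigroup $\{a,b\}$ acting on $\{1,a,b\}$, this pair gives the required division. It is also not $\Rrel$-trivial, since $a\semigroup^1=b\semigroup^1=\{a,b\}$ with $a\neq b$. By \cref{prop:head-subdirect} it suffices to work with one query coordinate $\query$, and we may take $|\Queries|=1$ so that $\states_\head=A$. The mechanism is the one named in the introduction: once the denominator is saturated, a further contribution is absorbed by the normalizer but still changes the numerator. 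This is exactly what \cref{def:soft-attention-no-absorption} forbids and what \cref{lem:sharp-soft-r-trivial} shows to be essential.

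\emph{Step 1 (freeze the denominator).} Choose a key $\key_0$ with large score, so that $w_0=\exp\similarity(\query,\key_0)$ is representable, together with a prefix word that drives $d$ to a value $D$ with $\operatorname{fl}(D+w)=D$ for every weight $w$ used afterwards. Such a $D$ exists by finiteness: either the denominator reaches the top of the format, or it reaches a binade where $w$ is below half an ulp. From then on, every input we use fixes the $d$-coordinate. Each input therefore acts on the set $\{(n,D)\}$ by the order-preserving map $n\mapsto \operatorname{fl}(n+wv)$, consistent with the proof idea of \cref{lem:soft-attention-aperiodic}.

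\emph{Step 2 (two constant maps on two numerator values).} Choose two numerator values $x_1<x_2$ and two increments $c_+=wv_+>0$ and $c_-=wv_-<0$, using a common key of unit weight and two values $v_\pm$, such that
\[
\operatorname{fl}(x_1+c_+)=\operatorname{fl}(x_2+c_+)=x_2,\qquad \operatorname{fl}(x_1+c_-)=\operatorname{fl}(x_2+c_-)=x_1 .
\]
My first attempt exploits ulp asymmetry at binade boundaries under round-to-nearest. If $x_2=2^k$ and $x_1=2^k-u$, where $u$ is the ulp just below $2^k$, then $c_+\in(u/2,u)$ sends $x_1$ up to $x_2$. It also fixes $x_2$, because the spacing above $2^k$ is $2u$. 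For $c_-$ the same pair does not work directly, since downward steps see the smaller spacing. I would therefore place $x_1$ at a lower boundary, taking $x_1=-2^k$ and working in the mirror image, or use two numerator coordinates: with vector values, each value component is an independent chain and the readout is still $n/d$. Alternatively, I can relax the target. Division only needs an invariant subset $Y\subseteq A$ on which the two inputs act as two distinct idempotents $e,f$ with $ef=f$ and $fe=e$. So I would accept a longer chain $x_1<\dots<x_r$ on which $c_+$ maps into $\{x_r\}$ and $c_-$ maps into $\{x_1\}$, for instance via saturation to the largest and smallest finite value or to $\pm\infty$ if the format permits. One can then collapse to the two states $\{x_1,x_r\}$.

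\emph{Step 3 (conclude).} Let $Y=\{(x_1,D),(x_2,D)\}$, which is invariant under the two generators $a$ and $b$. The subsemigroup they generate restricts on $Y$ to the two distinct constant maps. Restriction to an invariant subset is a surjective morphism onto a transformation semigroup on $Y$, and adjoining the reached "initial" state supplies the third point $1$. This gives $\barUthree\divides(\transsem_\head,\states_\head)$. Non-$\Rrel$-triviality then follows because $\Rtrivial$ is closed under division.

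The main obstacle is Step 2: making the floating-point rounding genuinely produce both constant maps simultaneously under a stated rounding mode. I expect to resolve it either by saturation at the extremes of the format or by the binade-asymmetry trick applied on both sides of zero.
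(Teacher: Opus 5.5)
Your overall plan (saturate the denominator, then work with monotone numerator updates on an invariant slice) is the paper's. However, the step you flag as the main obstacle is a genuine gap, and as posed it cannot be closed. Under round-to-nearest with ties-to-even there are no floats $x_1<x_2$ and \emph{single} increments $c_+>0>c_-$ satisfying your four identities. Write $g^-(x)$ and $g^+(x)$ for the gaps from $x$ to its floating-point predecessor and successor. The two $c_+$ identities give $x_2-x_1\le c_+ +\tfrac12 g^-(x_2)\le\tfrac12\bigl(g^-(x_2)+g^+(x_2)\bigr)$. This forces $x_1$ to be the predecessor of $x_2$ and $g^+(x_2)\ge g^-(x_2)$, and in case of equality both ties at $x_2\pm\tfrac12 g^-(x_2)$ must round to $x_2$. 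Symmetrically, the $c_-$ identities force $g^-(x_1)\ge g^+(x_1)$, and in case of equality both ties must round to $x_1$. Strict inequality occurs only for $x_2=2^k$, respectively $x_1=-2^k$. In that case the other endpoint has an all-ones, hence odd, significand and still needs its ties. Otherwise both endpoints need their ties, which ties-to-even cannot grant, since adjacent floats have significands of opposite parity. This is exactly the failure you observed, and the mirror image only moves it to the other endpoint. Your fallbacks do not escape it either:
\begin{itemize}
\item the longer chain still demands constant maps on $\{x_1,x_r\}$;
\item updates act coordinatewise, so vector values reduce to the same scalar problem in any coordinate where the two images differ;
\item $\pm\infty$ absorb every finite increment.
\end{itemize}

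The missing idea is that the constant maps need not be induced by letters. Step~3 only needs elements of $\transsem_\head$ that restrict to constant maps on an invariant subset, and words supply them by absorption followed by exact cancellation. This is the paper's construction. With $D=2^{p+2}$, $C=2^{p+1}$ and unit weights, $\operatorname{fl}(n+C)=C$ for $n\in\{-1,0,1\}$, and $C-C=0$ exactly. Hence $F_a=\insym_C\insym_{-C}$ and $F_b=\insym_C\insym_{-C}\insym_1$ are constant on $\{(-1,D),(0,D),(1,D)\}$; the intermediate state $(C,D)$ lies outside this set, which is harmless.

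You have also misread the target. By \cref{def:theUs} the two-element right-zero semigroup is $\uone$, whereas $\uthree=\{a,b,1\}$ has an identity adjoined. Since $\uthree$ does not divide $\uone$, a right-zero pair alone is not enough. You must also exhibit an element acting as the identity on the invariant set; the paper uses the value-zero input $\insym_0$, whose unit weight is absorbed by $D$. You also need a third invariant state labelled $1$, such as $(-1,D)$. The initial state $(0,0)$ cannot serve, since your letters do not send it into the saturated slice. Once the division is in place, your derivation of non-$\Rrel$-triviality from $a\,\Rrel\,b$ is fine.
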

\begin{proof}[proof sketch]
    On a saturated-denominator invariant slice, left-to-right rounded numerator updates realize the identity and two distinct constant maps, and hence $\barUthree$.
\end{proof}

Throughout this section, the decoder of
\cref{def:nope-transformer} is memoryless.  Equivalently, it may be
represented by an accepting core with trivial transition semigroup, so
appending it does not change any of the semigroup classes below.

\begin{theorem}\label{cor:window-definite}
    Any language accepted by a NoPE transformer with width-one
    sliding-window attention is definite.
\end{theorem}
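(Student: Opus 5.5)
The plan is to push the head-level property of \cref{lem:window-one-definite} through the divisibility chain of \cref{prop:transformer-acceptor-divisibility}. Fix a transformer acceptor $\acceptor=(\alphabet,\augtransformer,\initstate,\enc)$ all of whose heads use width-one sliding-window attention. By \cref{prop:transformer-acceptor-divisibility}, $\synsem{\Lang{\acceptor}}$ divides the realized transition semigroup $\semigroup_{\augtransformer}^{\enc(\alphabet)}$. Since $\Definite$ is a pseudovariety, it suffices to show that this realized semigroup is definite. A language is definite exactly when its syntactic semigroup is (with the usual convention about the empty word, which here acts as the identity and is handled separately by the accepting region).

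\textbf{Step 1: cores.} By \cref{lem:window-one-definite}, each head semigroup $\transsem_{\head_j}$ is definite. The core semigroup $\semigroup_\core$ is a subsemigroup of $\transsem_{\head_1}\times\cdots\times\transsem_{\head_m}$, so it is also definite, by closure under finite products and subsemigroups; this is exactly as in \cref{prop:head-subdirect}. The accepting core and the memoryless decoder have trivial transition semigroups, which are definite.

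\textbf{Step 2: cascades.} The core semigroups are composed by wreath products, and $\Definite$ is \emph{not} closed under arbitrary wreath products in general. I therefore prefer a direct argument on the realized semigroup rather than citing a closure theorem. The key observation is stronger than definiteness. By the proof sketch of \cref{lem:window-one-definite}, each nonempty word acts on every width-one head as the constant map to the key--value pair of its last symbol's layer input.

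I will prove by induction on the layer index $\layer$ that there is $k_\layer$ such that the state of layers $1,\ldots,\layer$ after reading any word $u$ with $|u|\ge k_\layer$ depends only on the last $k_\layer$ symbols of $u$, independently of the starting global state. For layer $1$ take $k_1=1$. For layer $\layer+1$: its state after $u$ is determined by the layer input at the last position of $u$. That input is a function of the encoded symbol and of the Mealy readouts of layers $\le\layer$ at that position. Those readouts depend on the states of layers $\le\layer$ just before the last symbol, which, by induction, are determined by the preceding $k_\layer$ symbols once $|u|\ge k_\layer+1$. Hence $k_{\layer+1}=k_\layer+1$ works. Consequently, for $n=k_\nlayers$ (plus one for the accepting core), every $t\in(\semigroup_{\augtransformer}^{\enc(\alphabet)})^n$ acts as a map whose value is independent of the input state. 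This gives $st=t$ for all $s$, which is \cref{def:definite}.

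\textbf{Main obstacle.} The delicate step is making the induction rigorous within the wiring formalism of \citet{nowak2026an}. I must check that a layer's input at position $t$ depends only on the current symbol and the lower-layer states \emph{before} the update, which matches the left-wreath convention of \cref{def:wreath-product}. I must also check that the dependency maps defining the realized semigroup relative to $\enc(\alphabet)$ do not let stale initial-state information leak in. If this bookkeeping is awkward, the fallback is to show that the wreath product of transformation semigroups in which every generator acts as a constant map (a ``reset'' semigroup) is definite of index at most the depth. This would use the multiplication formula of \cref{def:wreath-product} directly: a product of $\nlayers$ generators has first coordinate constant, and its function coordinate becomes constant after the first factor has fixed the lower state.
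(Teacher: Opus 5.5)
Your proof is correct, but Step~2 reaches definiteness of the realized semigroup by a different route from the paper's. The paper stays at the level of pseudovarieties. Heads are definite by \cref{lem:window-one-definite}, and cores are definite as subsemigroups of direct products. \cref{prop:transformer-factorization} then places the realized semigroup inside the wreath product of the core semigroups, and closure of $\Definite$ under finite wreath products and division (\cref{thm:definite-decomposition}) finishes before the divisibility chain is applied. You bypass both the factorization and the closure theorem, and instead prove by induction on depth that every word of length at least $\nlayers+1$ induces a constant map on the global state. This uses more than definiteness of the head semigroups: it needs every generator of a width-one core to be a reset. In return it gives an explicit index (depth plus one). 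It also shows directly that membership depends only on the last $\nlayers+1$ symbols, which is already the language-level definition of definiteness, so even the syntactic-semigroup step becomes optional. The paper's route needs only the abstract property of the head semigroups, and it is the same template the paper reuses for the $\Rtrivial$, $\LocallyR$, and $\Aperiodic$ bounds.

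Your stated reason for avoiding the closure theorem is mistaken: $\Definite$ is closed under wreath products of transformation semigroups. Suppose $\semigroup_1,\semigroup_2$ are definite with indices $n_1,n_2$. Multiply a product $(s_1,\phi_1)\cdots(s_{n_1+n_2},\phi_{n_1+n_2})$ in $(\semigroup_1,\states_1)\wr(\semigroup_2,\states_2)$ on the left by $(s,\phi)$. The first coordinate is unchanged, since $s_1\cdots s_{n_1+n_2}\in\semigroup_1^{n_1}$. For $j>n_1$, the factor $\phi_j$ is evaluated at $\state\cdot s\,s_1\cdots s_{j-1}=\state\cdot s_1\cdots s_{j-1}$, exactly as before. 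The product of these last $n_2$ factors lies in $\semigroup_2^{n_2}$, so it absorbs everything to its left, with or without the extra factor. Your induction is this computation specialized to resets, so nothing in your argument depends on the false remark.

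One minor inconsistency remains. Step~1 calls the accepting core's semigroup trivial, while Step~2 adds a layer for it as if it were a reset layer. The identity $st=t$ holds under either reading, but ``constant on the whole global state'' needs the latter.
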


\begin{proof}
    Every head transition semigroup is definite by
    \cref{lem:window-one-definite}.  Finite direct products and
    subsemigroups preserve definiteness, so every core transition
    semigroup is definite.  By \cref{prop:transformer-factorization},
    the realized transformer semigroup divides their cascade and is
    definite by \cref{thm:definite-decomposition}.  Its recognized
    language has a syntactic semigroup dividing this realized semigroup
    by \cref{prop:transformer-acceptor-divisibility}; closure under
    division therefore makes the language definite.
\end{proof}

\begin{theorem}\label{cor:sharp-r-trivial}
    Any language accepted by a NoPE transformer with sharp soft
    attention is $\Rrel$-trivial.
\end{theorem}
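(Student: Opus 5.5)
The argument follows the template of the proof of \cref{cor:window-definite}, with $\Definite$ replaced by $\Rtrivial$ throughout. We move from heads to cores to the full cascade, and then to the syntactic semigroup.

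First, \cref{lem:sharp-soft-r-trivial} shows that every sharp soft-attention head has $\transsem_\head\in\Rtrivial$. A transformer core has transition semigroup $\semigroup_\core\subsem\transsem_{\head_1}\times\cdots\times\transsem_{\head_m}$. Since $\Rtrivial$ is a pseudovariety, it is closed under finite direct products and subsemigroups, so every core semigroup is $\Rrel$-trivial. This is the same closure argument used in \cref{prop:head-subdirect}. The decoder is memoryless, so the accepting core has trivial transition semigroup, which is $\Rrel$-trivial; appending it therefore does not leave the class.

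Second, we pass to the cascade. \cref{prop:transformer-factorization} places the realized semigroup inside the ambient wreath product $\wreathsem_{\augtransformer}$ of the core transformation semigroups. We then need the fact that $\Rrel$-triviality is preserved by wreath products of transformation semigroups. The cleanest route is Stiffler's decomposition for $\Rtrivial$, which the paper outlines in \cref{app:subregular-kr} as the analogue of \cref{thm:definite-decomposition}: a finite semigroup is $\Rrel$-trivial if and only if it divides a wreath product of copies of $\barUtwo$ (together with trivial factors). Each $\Rrel$-trivial core semigroup divides such an iterated wreath product. Division is compatible with the wreath product, and the wreath product is associative, so the whole cascade divides an iterated wreath product of copies of $\barUtwo$. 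It is therefore $\Rrel$-trivial.

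If we prefer not to rely on this decomposition, there is a direct argument that the wreath product of two $\Rrel$-trivial transformation semigroups is $\Rrel$-trivial. Suppose $(s,\phi)\Rrel(t,\psi)$. Projecting to the first coordinate gives $s\Rrel t$, hence $s=t$. Using that, the second coordinates satisfy $\phi(\state)\Rrel\psi(\state)$ pointwise in $\semigroup_2$, so $\phi=\psi$.

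The main obstacle is this last pointwise step. The right factor in the wreath product is a function $\state_1\mapsto\phi_2(\state_1\cdot s_1)$ that is twisted by $s_1$, so the second-coordinate comparison is not literally pointwise. I would handle it by invoking the Stiffler characterization instead of the direct computation.

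Third, we finish with \cref{prop:transformer-acceptor-divisibility}. It gives $\synsem{\Lang{\acceptor}}\divides\semigroup_{\augtransformer}^{\enc(\alphabet)}\subsem\wreathsem_{\augtransformer}$. Since $\Rtrivial$ is closed under division, the syntactic semigroup is $\Rrel$-trivial, and so the language is $\Rrel$-trivial by definition.
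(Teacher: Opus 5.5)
Your proposal is correct and follows the paper's proof step for step. It uses \cref{lem:sharp-soft-r-trivial} for heads, closure of $\Rtrivial$ under finite products and subsemigroups for cores, \cref{prop:transformer-factorization} with the wreath-product closure of \cref{thm:rtrivial-decomposition} for the cascade, and \cref{prop:transformer-acceptor-divisibility} with closure under division for the syntactic semigroup. As a side remark, the direct argument you abandoned also works: in $(s,\phi)=(t,\psi)(u,\chi)$ the twisted factor $\chi(\state_1\cdot t)$ is still an element of $\semigroup_2$, so $\phi(\state_1)\in\psi(\state_1)\semigroup_2^1$ and symmetrically, which is all that $\Rrel$-triviality of $\semigroup_2$ requires.
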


\begin{proof}
    By \cref{lem:sharp-soft-r-trivial}, every head transition semigroup
    is $\Rrel$-trivial.  Finite direct products and subsemigroups of
    $\Rrel$-trivial semigroups are $\Rrel$-trivial, so every core
    transition semigroup is $\Rrel$-trivial.  The realized transformer
    semigroup divides the cascade of these core semigroups by
    \cref{prop:transformer-factorization}; hence it is
    $\Rrel$-trivial by \cref{thm:rtrivial-decomposition}.  Finally,
    \cref{prop:transformer-acceptor-divisibility} and closure under
    division imply that the syntactic semigroup of the accepted language
    is $\Rrel$-trivial.
\end{proof}
The above theorem recovers by an independent algebraic route a similar result obtained through linear temporal logic by \citet{li2025characterizing}. 

\begin{theorem}\label{cor:mixed-locally-r-trivial}
    Any language accepted by a transformer whose lower cores have sliding-window attention and whose higher cores have sharp attention heads, is locally $\Rrel$-trivial.
\end{theorem}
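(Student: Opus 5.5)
The plan mirrors the proofs of \cref{cor:window-definite} and \cref{cor:sharp-r-trivial}, but uses the ordered wreath structure rather than a single pseudovariety closed under wreath products. Write $\augtransformer = \transformer_{\mathrm{low}} \cascadewith{\wiremap} \transformer_{\mathrm{high}} \cascadewith{\wiremap_\nlayers} \acccore$. Here $\transformer_{\mathrm{low}}$ collects the lower cores, all of whose heads are width-one sliding-window heads. $\transformer_{\mathrm{high}}$ collects the higher cores, all of whose heads are sharp soft-attention heads.

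\textbf{Step 1: core semigroups.} By \cref{lem:window-one-definite}, every head in a lower core has a definite transition semigroup. By \cref{lem:sharp-soft-r-trivial}, every head in a higher core has an $\Rrel$-trivial one. Both $\Definite$ and $\Rtrivial$ are closed under finite direct products and subsemigroups. The transition semigroup of a core is a subsemigroup of the product of its head semigroups (\cref{def:transformer-core}). Hence every lower core semigroup lies in $\Definite$ and every higher core semigroup lies in $\Rtrivial$. The accepting core has trivial transition semigroup, so it is both definite and $\Rrel$-trivial, and we attach it to the top block.

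\textbf{Step 2: collapse each block.} Apply \cref{prop:transformer-factorization} to each block separately. The realized semigroup of the lower block is contained in an iterated wreath product of definite transformation semigroups. By \cref{thm:definite-decomposition}, this wreath product is definite. Likewise, by \cref{thm:rtrivial-decomposition}, the upper block's realized semigroup lies in a wreath product of $\Rrel$-trivial semigroups, which is $\Rrel$-trivial. Associativity of iterated wreath products up to isomorphism lets us regroup $\wreathsem_{\augtransformer}$ as $(\semigroup_D,\states_D)\wr(\semigroup_R,\states_R)$, with $\semigroup_D\in\Definite$ and $\semigroup_R\in\Rtrivial$. By \cref{prop:transformer-acceptor-divisibility}, $\synsem{\Lang{\acceptor}}$ divides this two-factor wreath product.

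\textbf{Step 3: the main obstacle.} We must show that every semigroup dividing a wreath product $\Rtrivial$-over-$\Definite$, with the definite factor at the bottom (receiving input first), is locally $\Rrel$-trivial. We will then conclude by closure of $\LocallyR$ under division.

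The first thing to try is to cite the Stiffler-type decomposition in \cref{app:subregular-kr}. It is expected to state that $\LocallyR$ is exactly the class of divisors of such wreath products, equivalently $\Rtrivial \ast \Definite$, and that $\Definite$ is generated by $\barUone$ and $\Rtrivial$ by $\barUtwo$.

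If only the other direction is recorded there, we argue directly. Take an idempotent $e=(s,\phi)$ of the wreath product $(\semigroup_D,\states_D)\wr(\semigroup_R,\states_R)$. Then $s$ is idempotent in the definite semigroup $\semigroup_D$. For any $(t,\psi)$, the product $e(t,\psi)e$ has first component $sts$. Since $s=s^n$ for the definiteness index $n$, we get $sts=s$. So the local monoid $e\wreathsem e$ has constant first coordinate $s$. Its second coordinates act on the fibre over $\states_D\cdot s$ as elements of a power of $\semigroup_R$, and the relevant values are evaluated at points of $\states_D\cdot s$, which the first coordinate fixes. Hence $e\wreathsem e$ embeds into a product of copies of $\semigroup_R$, so it is $\Rrel$-trivial. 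This shows that the wreath product itself lies in $\LocallyR$, and division then finishes the argument.

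The delicate points are two. First, one must check that the evaluation points of $\phi$ are indeed fixed under the left-wreath convention of \cref{def:wreath-product}. Second, the accepting core's trivial factor must not break the ordering; it does not, since a trivial semigroup is $\Rrel$-trivial and can be absorbed into the top block.
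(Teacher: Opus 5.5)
Your proposal is correct and follows the paper's proof: the head lemmas make the lower core semigroups definite and the higher ones $\Rrel$-trivial, \cref{prop:transformer-factorization} and \cref{prop:transformer-acceptor-divisibility} place $\synsem{\Lang{\acceptor}}$ below their cascade, and \cref{thm:locally-r-decomposition} (which does state the direction you need, with the definite factors at the bottom) gives $\LocallyR$. Your fallback direct argument that a wreath product with definite bottom and $\Rrel$-trivial top is locally $\Rrel$-trivial is a self-contained addition that makes the ordering explicit; the one step it leaves unjustified, injectivity of $(s,\chi)\mapsto\chi|_{\states_D\cdot s}$ on $e\wreathsem e$, follows from $e\cdot(s,\chi)=(s,\chi)$, which gives $\chi(\state)=\phi(\state)\,\chi(\state\cdot s)$ for all $\state\in\states_D$.
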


\begin{proof}
    Take a transformer each of whose lower cores have width-one
    sliding-window attention heads and whose higher cores sharp soft-attention heads. 
    For the lower heads, their transition semigroups are definite by \cref{lem:window-one-definite}; for the higher ones, they are
    $\Rrel$-trivial by \cref{lem:sharp-soft-r-trivial}.  
    The factorization of
    \cref{prop:transformer-factorization} places the realized
    transformer semigroup below a cascade of these core semigroups.
    By \cref{thm:locally-r-decomposition}, it is locally
    $\Rrel$-trivial.  The syntactic semigroup of the accepted language
    divides the realized semigroup by
    \cref{prop:transformer-acceptor-divisibility}, so it is also locally
    $\Rrel$-trivial.
\end{proof}

\begin{theorem}\label{cor:soft-star-free}
    Any language accepted by a NoPE transformer with soft attention is star-free.
\end{theorem}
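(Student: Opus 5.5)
The argument follows the same template as the proofs of \cref{cor:window-definite,cor:sharp-r-trivial}, with the pseudovariety $\Aperiodic$ in place of $\Definite$ or $\Rtrivial$, and concludes with Schützenberger's theorem.

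\textbf{Step 1: heads.} By \cref{lem:soft-attention-aperiodic}, every soft-attention head has an aperiodic transition semigroup $\transsem_\head$. If one prefers to argue coordinatewise, \cref{prop:head-subdirect} reduces this to each one-accumulator semigroup $\transsem_\query$, which is then inherited by $\transsem_\head$ because $\Aperiodic$ is a pseudovariety. Since the decoder is memoryless, the accepting core has trivial transition semigroup, and a trivial semigroup is aperiodic.

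\textbf{Step 2: cores.} Aperiodicity, $s^n=s^{n+1}$, is preserved by subsemigroups and finite direct products: in a product we take $n$ to be the maximum of the componentwise exponents. By \cref{def:transformer-core}, each core semigroup $\semigroup_\core$ is a subsemigroup of $\transsem_{\head_1}\times\cdots\times\transsem_{\head_m}$, so every core semigroup, including that of the accepting core, is aperiodic.

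\textbf{Step 3: cascades.} We need $\wreathsem_{\augtransformer}$, the iterated wreath product of the core transformation semigroups, to be aperiodic. This is the one genuinely algebraic step. It follows from the classical fact that wreath products of aperiodic transformation semigroups are aperiodic; this is the easy direction of Krohn--Rhodes, presumably recorded in \cref{app:subregular-kr} as the aperiodic analogue of \cref{thm:definite-decomposition}. If a direct proof is needed, argue as follows for $(s,\phi)\in(\semigroup_1,\states_1)\wr(\semigroup_2,\states_2)$. Choose $n$ with $s^n$ idempotent. Then $(s,\phi)^{n}$ has first component $e=s^n$, and its $k$-th power has second component $\state\mapsto\psi(\state)\psi(\state e)\cdots\psi(\state e)$, where $\psi$ is the second component of $(s,\phi)^n$. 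Because $e$ is idempotent, this is an element of the form $a\,b^{k-1}$ with $b=\psi(\state e)\in\semigroup_2$, and it stabilizes in $k$ by aperiodicity of $\semigroup_2$. Since $\states_1$ is finite, a uniform exponent exists. Induction on depth then gives aperiodicity of the whole cascade. By \cref{prop:transformer-acceptor-divisibility}, $\semigroup_{\augtransformer}^{\enc(\alphabet)}\subsem\wreathsem_{\augtransformer}$ is also aperiodic.

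\textbf{Step 4: conclusion.} \Cref{prop:transformer-acceptor-divisibility} gives $\synsem{\Lang{\acceptor}}\divides\semigroup_{\augtransformer}^{\enc(\alphabet)}$. Aperiodicity is closed under division: subsemigroups clearly inherit it, and morphic images inherit it since $\psi(s)^n=\psi(s^n)$. Hence the syntactic semigroup is finite and aperiodic, and Schützenberger's theorem shows that $\Lang{\acceptor}$ is star-free. The empty word needs a separate remark. The syntactic semigroup concerns $\Lang{\acceptor}\cap\alphabet^+$, and adding or removing $\emptyword$ preserves star-freeness because $\{\emptyword\}$ is star-free (it is the complement of $\alphabet\,\kleene{\alphabet}$, and $\kleene{\alphabet}$ is the complement of $\emptyset$). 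I expect Step 3 to be the main obstacle only if the appendix does not already supply the wreath-closure statement; otherwise the proof is a routine assembly of earlier results.
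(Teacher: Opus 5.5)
Your main line is the paper's proof. Head semigroups are aperiodic by \cref{lem:soft-attention-aperiodic}, and core semigroups inherit this as subsemigroups of direct products. The realized semigroup lies below the cascade and is aperiodic because $\Aperiodic$ is closed under wreath products; the paper simply cites this fact, alongside \cref{thm:aperiodic-decomposition}. Division and Sch\"utzenberger's theorem then finish the argument. Your remark on the empty word is a correct detail that the paper leaves implicit.

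The optional direct proof of wreath closure in Step 3 does not work as written, however. You show only that $y=(s,\phi)^n$ satisfies $y^K=y^{K+1}$, i.e.\ $(s,\phi)^{nK}=(s,\phi)^{n(K+1)}$. This says only that the period of $(s,\phi)$ divides $n$, not that it equals $1$. The telltale sign is that you never use aperiodicity of $\semigroup_1$, only its finiteness. Concretely, take $\semigroup_1=\cyclic{2}$ acting on itself and $\semigroup_2$ trivial. Your argument goes through verbatim (with $n=2$), yet the wreath product is a group of order $2$.

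The fix is to choose $n$ with $s^n=s^{n+1}$ and expand $(s,\phi)^m$ directly. For $m>n$ its second component is $\state\mapsto\phi(\state)\phi(\state s)\cdots\phi(\state s^{n-1})\,\phi(\state s^{n})^{m-n}$, which stabilizes in $m$ by aperiodicity of $\semigroup_2$ and finiteness of $\states_1$. Equivalently, run your argument for both $n$ and $n+1$, so that the period divides $\gcd(n,n+1)=1$.

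Since your primary justification in Step 3 is the classical closure fact, as in the paper, the proof as a whole stands.
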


\begin{proof}
    By \cref{lem:soft-attention-aperiodic}, the transition semigroup of
    every attention head is aperiodic.  The transition semigroup of each
    transformer core is a subsemigroup of the finite direct product of
    its head semigroups, and is therefore aperiodic.  By
    \cref{prop:transformer-factorization}, the realized transition
    semigroup of the transformer divides the cascade of its core
    semigroups.  Aperiodic semigroups are closed under finite direct
    products, wreath products, and division; equivalently, by
    \cref{thm:aperiodic-decomposition}, they are precisely the
    semigroups dividing finite cascades of copies of $\uthree$.
    Consequently, the realized transformer semigroup is aperiodic.
    The encoder restricts the admissible inputs but cannot enlarge this
    semigroup, and the memoryless decoder adds no transitions.  By
    \cref{prop:transformer-acceptor-divisibility}, the syntactic
    semigroup of the accepted language divides the realized transformer
    semigroup and is therefore aperiodic.  The algebraic
    characterization of star-free languages as those with an aperiodic syntactic semigroup \citep{schutzenberger1955theorie} yields the claim.\looseness=-1
\end{proof}

\begin{definition}[Free-wiring assumption]\label{ass:free-wiring}
Say a class of NoPE transformers satisfies the \defn{free-wiring assumption} if the encoders, the per-head key, query, and value maps, the mixing maps, the wiring maps, and the accepting regions may be arbitrary total maps between the corresponding finite sets, and the cascade may place cores with the required attention types in the order prescribed by \cref{thm:definite-decomposition,thm:rtrivial-decomposition,thm:locally-r-decomposition,thm:aperiodic-decomposition}.
\end{definition}

\begin{corollary}\label{cor:attention-bounds-tight}
    Under the free-wiring assumption, the above bounds are tight.
\end{corollary}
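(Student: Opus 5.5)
To prove tightness, we show the converse inclusions: every definite (resp.\ $\Rrel$-trivial, locally $\Rrel$-trivial, star-free) language is accepted by some NoPE transformer of the corresponding attention type. The plan follows the construction-witness recipe. First we pass to the algebra. A language $\lang$ is in the class iff $\synsem{\lang}$ lies in the corresponding pseudovariety. By \cref{thm:definite-decomposition,thm:rtrivial-decomposition,thm:locally-r-decomposition,thm:aperiodic-decomposition}, the transformation semigroup $(\synsem{\lang},\synsem{\lang}^1)$ then divides a finite cascade of the appropriate primes:
\begin{itemize}
\item copies of $\barUone$ for definite;
\item copies of $\barUtwo$ (and $\barUone$ as Stiffler prescribes) for $\Rrel$-trivial;
\item definite factors followed by $\Rrel$-trivial factors for locally $\Rrel$-trivial;
\item copies of $\barUthree$ for aperiodic.
\end{itemize}
It therefore suffices to realize each such cascade, with the division, inside $\wreathsem^{\enc(\alphabet)}_{\augtransformer}$ of a suitable acceptor.

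The realization is one core per cascade factor. For each factor $\overline{U}_i$ we take the head supplied by \cref{lem:window-one-contains-uone}, \cref{lem:sharp-soft-contains-utwo} or \cref{lem:soft-attention-contains-uthree}, respectively. We then restrict to the invariant subset of $\states_\head$ on which the chosen inputs act as the generators of $\overline{U}_i$. For $\barUone$ in the $\Rrel$-trivial case we instead use a sharp head with a constant-setting update, or note that $U_1$ divides $U_2$-cascades. Free wiring lets the layer-$n$ wiring map send the pair (current symbol, lower-layer readouts) to any input of the layer-$n$ head, so it can implement the map $\phi\colon \states_1\times\cdots\to U_i$ required by the left wreath product.

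A key subtlety is that a head's readout exposes its state only partially, and only through the Mealy output at the current query. The readouts passed upward must determine the *pre-update* states of the lower factors, matching the left-wreath convention. A free mixing map plus distinct values on the invariant states makes $\outfn_\head$ injective on the relevant slice. The sliding-window head in particular reads the value stored at $t-1$, which matches exactly the "state before update" evaluation. In the remaining cases we may add a width-one window head alongside, if a shift is needed, only where the attention type allows it. Otherwise we check that the $\Rrel$/aperiodic heads' readout at position $t$ reflects the prefix of length $t-1$, as the fold over $i<t$ in the attention definition does.

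Finally, we choose the encoder to map each letter to the input realizing its image generator, and take the accepting region to be the preimage of the accepting syntactic classes. Then \cref{prop:transformer-acceptor-divisibility} together with the division map yields $\Lang{\acceptor}=\lang$, including the handling of the empty word via the initial state. The main obstacle is this readout/timing step: ensuring that each head state restricted to the invariant slice remains invariant under all inputs the wiring can send, while staying reachable from $\initstate$, and that the state is observable to the next layer. We would first try to enlarge invariant slices by choosing keys with underflowing weights as "identity" inputs, and to make readouts injective by choosing distinct value vectors.
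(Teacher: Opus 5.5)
\paragraph{Same route as the paper, with one false step and one open step.}
Your overall route is the paper's: you pass to $\synsem{\lang}$, decompose it with \cref{thm:definite-decomposition,thm:rtrivial-decomposition,thm:locally-r-decomposition,thm:aperiodic-decomposition}, realize each factor by the head of \cref{lem:window-one-contains-uone,lem:sharp-soft-contains-utwo,lem:soft-attention-contains-uthree}, and let free wiring, the encoder and the accepting region do the rest.

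The $\Rrel$-trivial branch contains a false step. \cref{thm:rtrivial-decomposition} uses only copies of $\barUtwo$, and neither fallback you offer for $\barUone$ can work. In $\uone$ we have $ab=b$ and $ba=a$, so $aU_1^1=bU_1^1$ with $a\neq b$. Hence $\uone$ is not $\Rrel$-trivial. It therefore divides neither the semigroup of a sharp head (\cref{lem:sharp-soft-r-trivial}) nor any $\barUtwo$-cascade, since $\Rtrivial$ is closed under wreath products and division. Drop $\barUone$ there; it is needed only in the definite and mixed cases, where sliding-window heads supply it. Also, \cref{prop:transformer-acceptor-divisibility} is the upper-bound direction and does not give $\Lang{\acceptor}=\lang$. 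That equality comes from lifting each letter through the division: start in a state of $\theta^{-1}(1)$ and take $\accreg$ to be the $\theta$-preimage of the accepting classes.

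The step you call the main obstacle is left open, and your suggested fixes (underflowing identity inputs, distinct values) address invariance and observability, not the real difficulties. For example, ``the chosen inputs act as the generators'' fails for $\barUthree$. \cref{lem:soft-attention-contains-uthree} obtains $F_a$ and $F_b$ only as the words $\insym_C\insym_{-C}$ and $\insym_C\insym_{-C}\insym_1$; each single input $\insym_{\pm C}$ leaves $\states'$. The encoder and wiring, however, can feed only one head input per position. Likewise, the saturated slices of \cref{lem:sharp-soft-contains-utwo,lem:soft-attention-contains-uthree} do not contain the zero initial accumulator.

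The paper does not establish these points either. Its proof makes ``the transformer wirings realize the required cascade controls and the decoder realizes the required accepting subset'' part of the hypothesis, and it claims tightness only under that implementability condition. To match the paper, invoke the free-wiring assumption (\cref{ass:free-wiring}) in exactly this strengthened form, rather than presenting the step as something still to be checked. Without that reading, the argument is incomplete at exactly this point, in your version and in the paper's.
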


\begin{proof}
    All the lower-bound components required by the decomposition theorems are available within the corresponding attention mechanisms:
    width-one sliding-window attention realizes
    $\barUone$ by \cref{lem:window-one-contains-uone}, sharp soft
    attention realizes $\barUtwo$ by
    \cref{lem:sharp-soft-contains-utwo}, and floating-point soft
    attention realizes $\barUthree$ by
    \cref{lem:soft-attention-contains-uthree}.  The decomposition theorems,
    \cref{thm:definite-decomposition,thm:rtrivial-decomposition,thm:locally-r-decomposition,thm:aperiodic-decomposition},
    express every semigroup in the respective upper-bound class as a
    divisor of a finite cascade of these components.  If the transformer
    wirings realize the required cascade controls and the decoder
    realizes the required accepting subset, the resulting transformer
    recognizes every language in the corresponding class. 
    Together with the preceding upper bounds, this proves the tightness under the
    stated implementability condition.\looseness=-1
\end{proof}

\begin{table}[H]
    \centering
    \begin{tabular}{c|c c c}
        Attention & Primes & Variety & Languages \\
        \hline
        sliding & $\uone$ & $\Definite$ & definite \\
        sharp & $\utwo$ & $\Rtrivial$ & $\Rrel$-trivial \\
        sliding + sharp & $\uone,\utwo$ & $\LocallyR$ & loc. $\Rrel$-trivial \\
        soft & $\uone, \utwo, \uthree$ & $\aperiodic$ & star-free
    \end{tabular}
    \caption{Result overview. We map the attention types to the realizable prime semigroups, the resulting semigroup varieties closed under direct product, wreath product, and division, and the corresponding languages that NoPE transformers with these attention types can recognize. Note that $\uone$ and $\utwo$ are subsemigroups of $\uthree$.}
    \label{tab:results}
\end{table}

\subsection{The Importance of Evaluation Order}
Assumption \textbf{E} aligns temporal order with the right action of the transformation semigroup.
Reversing the evaluation order would exchange right zeros with left zeros and replace the $\Rrel$-trivial and locally $\Rrel$-trivial conclusions by their dual $\Lrel$-trivial and locally $\Lrel$-trivial conclusions.

\section{Related Work}
Our framework extends \citeposs{nowak2026an} algebraic analysis of RNN language models.
We identify assumptions under which NoPE transformers become algebraic RNNs, then use refinements of Krohn--Rhodes theory to classify different attention mechanisms and derive compositional expressivity bounds.

Closest are formal-language characterizations of masked transformers.
Using Krohn--Rhodes theory, \citet{yang2024masked} show that fixed-precision transformers with unique hard attention recognize exactly the star-free languages.
Using LTL, \citet{li2025characterizing} obtain an $\Rrel$-trivial characterization for fixed-precision soft attention.
We recover their bound for sharp soft attention but show that it depends on the numerical semantics: left-to-right floating-point soft attention yields the broader aperiodic bound.
Concurrently, \citet{li-cotterell-2026-characterizing} use LTL to show that local attention yields definite languages and that combining it with idealized soft attention yields locally $\Rrel$-trivial languages, matching our results at the level of language classes.

Using Krohn--Rhodes theory, \citet{liu2023transformers} obtain solvable-language expressivity under growing precision and length-dependent wiring; we instead assume fixed precision and fixed parameters.

\section{Discussion}
This work establishes an algebraic framework for the underlying dynamics of masked finite-precision transformers under explicit numerical semantics, by showing that they can be formalized as RNNs.
This allows placing lower and upper bounds on the languages such architectures can recognize.

The analysis is restricted to transformers without positional embeddings, because mapping from strings to position-indexed strings is not a morphism.
This restriction is substantive: an encoder with access to unbounded position information need not induce the fixed finite input alphabet used by the semigroup analysis, and may support nonregular behavior independently of the attention mechanism. 
While unbounded position information cannot be assessed using this algebraic framework, certain finite or repeating position embeddings could be incorporated in future work.

The query-indexed accumulator state can be extremely large. By \cref{prop:head-subdirect}, however, every pseudovariety question about a head reduces to the one-accumulator semigroups $\transsem_\query$, so the size of $A^{\Queries}$ is harmless for the bounds. 
The algebraic description is therefore primarily a structural tool for proving expressivity bounds, rather than a prescription for explicitly enumerating the transition semigroup of a trained model.
The dependence of the results on numerical semantics is nevertheless practically relevant for heavily quantized architectures, where the sets of queries and accumulator values are substantially reduced \citep{Duan2023BitformerAE}.  
More broadly, the decomposition into head, core, and cascade semigroups suggests a route to architecture design: attention mechanisms can be chosen to supply particular algebraic components, while wiring maps determine how these components interact across layers. 
This perspective may enable attention variants whose expressivity is controlled by construction.\looseness=-1

\bibliography{references}

\clearpage
\appendix
\crefalias{section}{appendix}

\section{Further Preliminaries}\label{app:prelim}
\subsection{Recognition of Languages by Semigroups}

\begin{definition}\label{def:semigroup-recognition}
A semigroup $\semigroup$ \defn{recognizes} a language $\lang \subseteq \alphabet^+$ if there exist a semigroup morphism $\eta \colon \alphabet^+ \to \semigroup$ and a subset $P \subseteq \semigroup$ such that $\lang = \eta^{-1}(P)$.
\end{definition}
For languages $\lang \subseteq \alphabet^*$, the empty word is handled separately by whether $\emptyword \in \lang$; the semigroup recognizes the nonempty part $\lang \cap \alphabet^+$.
This is analogous to recognition by the syntactic monoid, except that the semigroup presentation does not require the empty word to be part of the input semigroup.
This choice loses no information for language recognition: any semigroup $\semigroup$ can be made into a monoid by adjoining an identity element, and the free monoid over $\alphabet$ is $\alphabet^* = \alphabet^+ \cup \{\emptyword\}$.

\begin{definition}\label{def:syntactic-semigroup}
For a language $\lang \subseteq \alphabet^*$, the \defn{syntactic congruence} $\syncong{\lang}$ on $\alphabet^+$ is defined by
\begin{equation}
u \syncong{\lang} v
\quad\Longleftrightarrow\quad
\text{for all } x,y \in \alphabet^*,\;
xuy \in \lang \Longleftrightarrow xvy \in \lang.
\end{equation}
The quotient semigroup $\synsem{\lang} \defeq \alphabet^+/{\syncong{\lang}}$ is the \defn{syntactic semigroup} of $\lang$.
\end{definition}

\begin{theorem}\label{thm:syntactic-semigroup}
If a semigroup $\semigroup$ recognizes $\lang \cap \alphabet^+$, then $\synsem{\lang} \divides \semigroup$. Moreover, $\lang$ is regular if and only if $\synsem{\lang}$ is finite.
\end{theorem}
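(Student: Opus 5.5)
The statement has two halves, and I will prove them separately. The first is the division $\synsem{\lang}\divides\semigroup$ whenever $\semigroup$ recognizes $\lang\cap\alphabet^+$. The second is the equivalence between regularity of $\lang$ and finiteness of $\synsem{\lang}$.

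For the division, fix a morphism $\eta\colon\alphabet^+\to\semigroup$ and $P\subseteq\semigroup$ with $\lang\cap\alphabet^+=\eta^{-1}(P)$.
\begin{enumerate*}[label=(\roman*)]
\item Take $T\defeq\eta(\alphabet^+)\subsem\semigroup$. It is a subsemigroup because $\eta$ is a morphism.
\item Define $\psi\colon T\to\synsem{\lang}$ by $\psi(\eta(u))\defeq[u]_{\syncong{\lang}}$.
\item Check that $\psi$ is well defined, i.e.\ that $\eta(u)=\eta(v)$ implies $u\syncong{\lang}v$.
\end{enumerate*}

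Step (iii) is the only real obstacle, because $\syncong{\lang}$ quantifies over contexts $x,y\in\alphabet^*$, possibly empty, while $\eta$ lives on $\alphabet^+$. I handle it by cases on the context.
\begin{itemize}
\item If $x$ and $y$ are both nonempty, then $\eta(xuy)=\eta(x)\eta(u)\eta(y)$, and the same factorisation holds for $v$.
\item If $x=\emptyword$, drop the left factor, and symmetrically for $y$.
\item If both are empty, the relevant word is $u$ itself.
\end{itemize}
In every case $\eta(xuy)=\eta(xvy)$. Since $xuy$ and $xvy$ are nonempty, membership in $\lang$ coincides with membership in $\eta^{-1}(P)$, so $xuy\in\lang\iff xvy\in\lang$.

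Once $\psi$ is well defined, it is a morphism because $\syncong{\lang}$ is a congruence: $\psi(\eta(u)\eta(v))=\psi(\eta(uv))=[uv]=[u][v]$. It is surjective because every class $[u]$ with $u\in\alphabet^+$ is hit by $\eta(u)$. This gives $\synsem{\lang}\divides\semigroup$ in the sense of \cref{def:division}.

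For regularity I go through finite automata (or the Myhill--Nerode theorem).

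\emph{Finite syntactic semigroup implies regular.} The canonical projection $\alphabet^+\to\synsem{\lang}$ recognizes $\lang\cap\alphabet^+$ with $P$ the set of classes contained in $\lang$. This is legitimate because $\lang$ is saturated by $\syncong{\lang}$, using empty contexts. Build a DFA on states $\synsem{\lang}^1$: start at the adjoined identity, let reading $\sym$ multiply on the right by $[\sym]$, and accept $P$, adding the start state exactly when $\emptyword\in\lang$. This DFA recognizes $\lang$, so $\lang$ is regular.

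\emph{Regular implies finite syntactic semigroup.} Take a DFA for $\lang$. Its transition semigroup is finite and recognizes $\lang\cap\alphabet^+$ via $u\mapsto\delta(-,u)$, with $P$ the set of transformations sending the initial state into the final states. The first half of the theorem then makes $\synsem{\lang}$ a quotient of a subsemigroup of a finite semigroup, hence finite.
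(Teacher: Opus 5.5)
Your proof is correct and complete. The paper gives no proof of this theorem: it is quoted in the preliminaries as a standard fact of algebraic language theory, so there is no in-paper argument to compare against. What you wrote is the usual textbook proof. You factor $\eta$ through its image and check that $\eta(u)=\eta(v)$ forces $u\syncong{\lang}v$. Your case split on empty contexts is exactly the point needing care, since $\syncong{\lang}$ quantifies over contexts in $\alphabet^*$ while $\eta$ is defined only on $\alphabet^+$. Equivalently, you could extend $\eta$ to a monoid morphism $\alphabet^*\to\semigroup^1$ and avoid the cases. Deriving ``regular implies finite'' from the first half, via the transition semigroup of a DFA, is also the standard route.

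One small precision concerns the converse direction. In your DFA on $\synsem{\lang}^1$, the identity must be adjoined as a new element even when $\synsem{\lang}$ is already a monoid. Under the common convention that $\semigroup^1=\semigroup$ for monoids, the construction breaks. For example, with $\lang=\alphabet^+$ the syntactic semigroup is trivial and its unique element lies in $P$. The start state would then be that element and would accept $\emptyword\notin\lang$, and it could not be removed from the accepting set without rejecting nonempty words. Your phrase ``the adjoined identity'' indicates you mean a fresh element. With that reading, the start state is reached only by $\emptyword$, so adding it to the accepting set or leaving it out according to whether $\emptyword\in\lang$ is legitimate, and the argument goes through.
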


Thus, the syntactic semigroup is minimal up to division among the semigroups recognizing the nontrivial part of a language. It provides the algebraic interface between language recognition and the transition semigroups of concrete state machines.

\subsection{Transducers and Acceptors}

\begin{definition}\label{def:transducer}
A \defn{transducer} is a tuple
\begin{equation}
\transducer = (\states,\initstate,\inset,\outset,\delta,\gamma),
\end{equation}
where $\states$ is a state set with initial state $\initstate \in \states$, $\inset$ and $\outset$ are input and output sets, $\delta \colon \states \times \inset \to \states$ is a transition map, and $\gamma \colon \states \times \inset \to \outset$ is an output map (Mealy form \citep{mealy1955}; the Moore form is the special case in which $\gamma$ does not depend on its second argument \citep{moore1956}).
\end{definition}

The transition map extends recursively to words as $\extdelta \colon \states \times \inset^* \to \states$, where
\begin{equation}
\extdelta(\state,\emptyword) = \state,
\qquad
\extdelta(\state,\str\insym)
= \delta(\extdelta(\state,\str),\insym).
\end{equation}
Thus, on input $\str \in \inset^*$, the transducer produces the output $\gamma(\extdelta(\initstate,\str))$.

\begin{definition}\label{def:acceptor}
An \defn{acceptor} over $\alphabet$ is a tuple
\begin{equation}
\acceptor = (\states,\initstate,\alphabet,\delta,\accreg),
\end{equation}
where $\states$, $\initstate$, and $\delta$ are as in \cref{def:transducer}, and $\accreg \subseteq \states$ is the set of accepting states. The language recognized by $\acceptor$ is
\begin{equation}
\Lang{\acceptor}
\defeq
\{\str \in \alphabet^* \mid \extdelta(\initstate,\str) \in \accreg\}.
\end{equation}
Equivalently, an acceptor is a transducer with output set $\{0,1\}$ and output map $\charfun{\accreg}$.
\end{definition}

\subsection{Transition Semigroups of Transducers}

\begin{definition}\label{def:transition-semigroup}
Let $\transducer$ be a transducer. Each input $\insym \in \inset$ induces a state transformation
\begin{equation}
\delta_{\insym} \colon \states \to \states,
\qquad
\delta_{\insym}(\state) \defeq \delta(\state,\insym).
\end{equation}
The \defn{transition semigroup} of $\transducer$ is the transformation semigroup
\begin{equation}
\transsem_{\transducer}
\defeq
\gensem{\delta_{\insym} \mid \insym \in \inset}
\subsem \states^{\states}.
\end{equation}
\end{definition}
The elements of $\transsem_{\transducer}$ are precisely the state transformations induced by nonempty input words. The empty word acts as the identity transformation on $\states$.

\begin{lemma}\label{prop:eta-morphism}
The map $\eta \colon \inset^+ \to \transsem_\transducer$ defined by $\state \cdot \eta(\str) \defeq \extdelta(\state,\str)$ is a semigroup morphism.
\end{lemma}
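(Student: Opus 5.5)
We need to show that $\eta$ is well defined and satisfies $\eta(uv)=\eta(u)\eta(v)$ for all $u,v\in\inset^+$.

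\emph{Well-definedness.} First we check that $\eta(\str)$ lands in $\transsem_\transducer$ rather than merely in $\states^\states$. We argue by induction on $|\str|\geq 1$. For $\str=\insym$, the recursive definition of $\extdelta$ gives $\extdelta(\state,\insym)=\delta(\extdelta(\state,\emptyword),\insym)=\delta_\insym(\state)$. Hence $\eta(\insym)=\delta_\insym$, which is a generator. For $\str\insym$ with $\str$ nonempty, the recursion gives
\[
\state\cdot\eta(\str\insym)=\delta_\insym\bigl(\state\cdot\eta(\str)\bigr),
\]
so $\eta(\str\insym)$ is the right-action composite of $\eta(\str)$ followed by $\delta_\insym$. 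By induction this is a product of generators, so $\eta(\str)\in\transsem_\transducer$. In particular, a word $\insym_1\cdots\insym_n$ acts as $\delta_{\insym_1}\cdots\delta_{\insym_n}$.

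\emph{Morphism property.} The key step is the identity
\[
\extdelta(\state,uv)=\extdelta\bigl(\extdelta(\state,u),v\bigr),
\]
the same identity as \cref{eq:cocycle} used in \cref{prop:head-sufficiency}. We prove it by induction on $|v|$.
\begin{itemize}
\item For $v=\emptyword$, both sides equal $\extdelta(\state,u)$.
\item For $v=v'\insym$, the recursion gives $\extdelta(\state,uv'\insym)=\delta(\extdelta(\state,uv'),\insym)$. The induction hypothesis turns this into $\delta\bigl(\extdelta(\extdelta(\state,u),v'),\insym\bigr)=\extdelta(\extdelta(\state,u),v'\insym)$.
\end{itemize}
With the right-action convention $\state\cdot(st)=(\state\cdot s)\cdot t$, the identity reads
\[
\state\cdot\eta(uv)=(\state\cdot\eta(u))\cdot\eta(v)=\state\cdot\bigl(\eta(u)\eta(v)\bigr)
\]
for every $\state\in\states$. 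Two transformations that agree on every state are equal in $\states^\states$, so $\eta(uv)=\eta(u)\eta(v)$.

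There is no real obstacle. The only point needing care is the composition convention: with right actions, the product $st$ in $\transsem_\transducer$ means "apply $s$, then $t$". This matches left-to-right reading of the word, so no order reversal appears. Had maps been composed as functions, $\eta$ would be an anti-morphism, which is exactly the left/right duality noted in the paragraph on evaluation order.
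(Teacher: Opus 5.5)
Your proof is correct and is essentially the paper's argument, which the paper gives in one line as an induction on word length; your induction on $|v|$ for $\extdelta(\state,uv)=\extdelta(\extdelta(\state,u),v)$ is the same verification the paper invokes for \cref{eq:cocycle}. You also make explicit two points the paper leaves implicit. First, $\eta(\str)$ actually lies in $\transsem_\transducer$, as the product $\delta_{\insym_1}\cdots\delta_{\insym_n}$ of generators. Second, the right-action convention is what makes $\eta$ a morphism rather than an anti-morphism. The link you draw to the evaluation-order paragraph is only loose, though: that paragraph is about reversing the fold, not about the composition convention.
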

\begin{proof} Follows readily by induction on $|\str|$.
\end{proof}

\begin{remark}[Why semigroups]\label{rem:why-semigroups}
The subregular classes of the case study are invisible at monoid level, so the passage to semigroups is necessary, not stylistic.
A definite monoid is trivial: if $1 \in \semigroup^{n}$, then $s \cdot 1 = 1$ for every $s$.
Likewise $1\semigroup 1 = \semigroup$, so on monoids $\LocallyR$ collapses onto $\Rtrivial$.
Two of the four classes below therefore have no monoid-level content; the semigroup of a width-one window head is definite and ceases to be so the moment an identity is adjoined.
\end{remark}

\section{NoPE Transformer Architecture}\label{app:nope-architecture}

We now instantiate the algebraic transformer of the preceding section as a finite-precision transformer with no position embeddings (NoPE transformer).
Following the streaming formulation of attention above, a head reads the strict prefix state stored in its current state and then incorporates the key--value pair at the current position.

Fix a finite arithmetic domain $\domain$, a depth $\nlayers\in\Z^+$, a model dimension $\seqdim\in\Z^+$, and numbers of heads $m_1,\ldots,m_\nlayers$.  
For every core $\layer\in\{1,\ldots,\nlayers\}$, set
\begin{equation}
    \inset_\layer=\outset_\layer=\domain^\seqdim.
\end{equation}
Head $i$ of layer $\layer$ has finite query, key, value, and head-output
sets $\Queries_{\layer,i}$, $\Keys_{\layer,i}$,
$\Values_{\layer,i}$, and $\outset_{\layer,i}$, together with learned maps
\begin{align}
    \qryop_{\layer,i}&\colon\inset_\layer\to\Queries_{\layer,i},
    &
    \keyop_{\layer,i}&\colon\inset_\layer\to\Keys_{\layer,i},
    \\
    \valop_{\layer,i}&\colon\inset_\layer\to\Values_{\layer,i}.
\end{align}
Its attention mechanism is specified by a finite accumulator set
$A_{\layer,i}$, initial accumulators
$a^\circ_{\layer,i,\query}\in A_{\layer,i}$, and maps
\begin{equation}
    \delta_{\layer,i,\query}\colon
    A_{\layer,i}\times\Keys_{\layer,i}\times\Values_{\layer,i}
    \to A_{\layer,i},
    \qquad
    \rho_{\layer,i,\query}\colon A_{\layer,i}\to\outset_{\layer,i},
    \label{eq:streaming-attention-interface}
\end{equation}
for every $\query\in\Queries_{\layer,i}$.  Thus,
\begin{equation}
    \states_{\layer,i}=A_{\layer,i}^{\Queries_{\layer,i}},
    \qquad
    \state^\circ_{\layer,i}(\query)=a^\circ_{\layer,i,\query}.
\end{equation}

The transformer core
$\core_\layer=(\states_\layer,\inset_\layer,\outset_\layer,
\recfn_\layer,\outfn_\layer)$ has the product state
\begin{equation}
    \states_\layer
    =\prod_{i=1}^{m_\layer}\states_{\layer,i}.
\end{equation}
For $\state=(\state_1,\ldots,\state_{m_\layer})$ and
$\insym\in\inset_\layer$, its recurrence is defined componentwise by
\begin{equation}
    \recfn_\layer(\state,\insym)_i(\query)
    =
    \delta_{\layer,i,\query}\!\left(
        \state_i(\query),
        \keyop_{\layer,i}(\insym),
        \valop_{\layer,i}(\insym)
    \right).
    \label{eq:nope-core-recurrence}
\end{equation}
The head outputs and core readout are
\begin{align}
    h_{\layer,i}(\state_i,\insym)
    &=
    \rho_{\layer,i,\qryop_{\layer,i}(\insym)}
    \!\left(\state_i(\qryop_{\layer,i}(\insym))\right),\\
    \outfn_\layer(\state,\insym)
    &=
    W^O_\layer\!\left(
        h_{\layer,1}(\state_1,\insym),\ldots,
        h_{\layer,m_\layer}(\state_{m_\layer},\insym)
    \right),
    \label{eq:nope-core-readout}
\end{align}
where $W^O_\layer$ is the finite-precision output projection.  Consequently,
at position $\tstep$, the core computes
\begin{equation}
    \state_{\layer,\tstep}
    =\recfn_\layer(\state_{\layer,\tstep-1},\insym_{\layer,\tstep}),
    \qquad
    \outsym_{\layer,\tstep}
    =\outfn_\layer(\state_{\layer,\tstep-1},\insym_{\layer,\tstep}).
\end{equation}
In particular, the readout uses precisely the strict prefix accumulated
before position $\tstep$.

Let $\operatorname{LN}_\layer\colon\domain^\seqdim\to\domain^\seqdim$
be layer normalization with learned scale and bias, and let
$\operatorname{MLP}_\layer\colon\domain^\seqdim\to\domain^\seqdim$
be a finite-precision feedforward network.  The wiring from layer $\layer$
to layer $\layer+1$ is
\begin{align}
    r_\layer(\insym,\outsym)
    &=\operatorname{LN}_\layer^{\mathrm{att}}(\insym+\outsym),\\
    \wiremap_\layer(\insym,\outsym)
    &=\operatorname{LN}_\layer^{\mathrm{mlp}}\!\left(
        r_\layer(\insym,\outsym)
        +\operatorname{MLP}_\layer(r_\layer(\insym,\outsym))
    \right),
    \label{eq:nope-wiring}
\end{align}
with every operation evaluated in the fixed finite-precision semantics.
This wiring includes the attention residual connection, layer
normalization, the MLP, and the MLP residual connection.

The encoder and decoder are set maps
\begin{equation}
    \enc\colon\alphabet\to\domain^\seqdim,
    \qquad
    \dec\colon\domain^\seqdim\to\{0,1\}.
\end{equation}
The encoder factors through the one-hot embedding $\alphabet\hookrightarrow\domain^{|\alphabet|}$ followed by a
learned linear map to $\domain^\seqdim$, and the decoder is a learned
linear map followed by a binary decision rule.  Since neither map receives
a position index, the resulting architecture has no positional embeddings.

\begin{definition}[NoPE transformer]\label{def:nope-transformer}
    A \defn{NoPE transformer} is the algebraic transformer whose cores,
    recurrence and readout maps, wiring maps, encoder, and decoder are
    specified in \cref{eq:streaming-attention-interface,eq:nope-core-recurrence,eq:nope-core-readout,eq:nope-wiring}.
\end{definition}

\section{Available Semigroups by Attention Type}\label{app:prime-semigroups}

\begin{definition}\label{def:theUs}
    $\uone = (\{a,b\}, \cdot)$ with multiplication table 
    \begin{table}[H]
        \centering
        \begin{tabular}{c|cc}
            $\cdot$ & a & b \\
            \hline
            a & a & b \\
            b & a & b
        \end{tabular}
        \label{tab:uone-action}
    \end{table}   
    $\utwo = (\{0,1\}, \cdot)$ with multiplication table 
    \begin{table}[H]
        \centering
        \begin{tabular}{c|cc}
            $\cdot$ & 0 & 1 \\
            \hline
            0 & 0 & 0 \\
            1 & 0 & 1
        \end{tabular}
        \label{tab:utwo-action}
    \end{table}

    $\uthree = (\{a,b,1\}, \cdot)$ with multiplication table 
    \begin{table}[H]
        \centering
        \begin{tabular}{c|ccc}
            $\cdot$ & a & b & 1 \\
            \hline
            a & a & b & a \\
            b & a & b & b \\
            1 & a & b & 1
        \end{tabular}
        \label{tab:uthree-action}
    \end{table}
\end{definition} 
For each multiplication table of $U_i$, the entry in row $x$ and column $y$ is the product $xy$: first $x$, then $y$.
\windowOneDefinite*

\begin{proof}
    By \cref{def:streaming-window-one-attention}, reading an input
    $\insym$ replaces every query-indexed accumulator by
    $(\keyop(\insym),\valop(\insym))$, independently of the previous
    state.  Hence, every generator
    $\recfn^\head_\insym\in\transsem_\head$ is a constant
    transformation.  More generally, the transformation induced by a
    nonempty input word is the constant transformation determined by its
    final symbol.  Therefore, for all
    $s,t\in\transsem_\head,\ st=t$.
    In particular, this holds for every
    $t\in\transsem_\head^2$.  Taking $n=2$ in the definition of
    definiteness gives $\transsem_\head\in\Definite$.
\end{proof}

\windowOneContainsUone*

\begin{proof}
Take a singleton key set $\Keys=\{\key\}$ and choose three distinct values $\Values=\{\val_1,\val_a,\val_b\}$, so that $A=\{(\key,\val_1),(\key,\val_a),(\key,\val_b)\}$;
 write $\state_1,\state_a,\state_b$ for these three accumulator states.    
 Let the admissible inputs be $\insym_a$ and $\insym_b$, with values $\val_a$ and $\val_b$. 
By \cref{def:streaming-window-one-attention}, the induced transformations are the constant maps
    \begin{equation}
        \state \cdot F_a = \state_a,
        \qquad
        \state \cdot F_b = \state_b,
        \qquad \state \in A .
    \end{equation}
A product of constant maps is its last factor, so $F_xF_y=F_y$ for $x,y\in\{a,b\}$: four identities, each verified by evaluating both sides at any state. 
The set $A$ is invariant, and the surjections
    \begin{equation}
        \theta(\state_1)=1,\ \theta(\state_a)=a,\ \theta(\state_b)=b,
        \varphi(F_a)=a,\ \varphi(F_b)=b,
    \end{equation}
satisfy $\theta(\state\cdot F)=\theta(\state)\cdot\varphi(F)$ in all six cases, both sides being the label of the last-written pair. 
This shows the division $\barUone\divides(\transsem_\head,\states_\head)$.
\end{proof}
\sharpSoftRtrivial*

\begin{proof}
    Consider the deterministic semiautomaton induced by the head on
    $\states_\head$, and equip its states with the reachability preorder
    \begin{equation}
        \state\preceq\state'
        \quad\Longleftrightarrow\quad
        \state'=\state\cdot w
        \text{ for some }w\in\inset^*.
    \end{equation}
    This relation is reflexive and transitive.  We show that it is also
    antisymmetric.

    Let $d_\query(\state)$ be the denominator of the accumulator for
    query $\query$ in state $\state$.  Every sharp soft-attention
    transition satisfies
    \begin{equation}
        d_\query(\state)
        \leq
        d_\query(\state\cdot\insym)
        \qquad
        \text{for all }\query\in\Queries.
        \label{eq:sharp-denominator-order}
    \end{equation}
    Suppose $\state\preceq\state'$ and $\state'\preceq\state$.  Applying
    \cref{eq:sharp-denominator-order} along the two witnessing paths
    yields
    \begin{equation}
        d_\query(\state)
        \leq
        d_\query(\state')
        \leq
        d_\query(\state)
    \end{equation}
    for every query.  Hence no denominator advances anywhere along the
    path from $\state$ to $\state'$.  By the definition of sharp soft
    attention, whenever a denominator does not advance, its numerator
    also remains unchanged.  Every transition on this path is therefore
    a self-loop on the full head state, and $\state=\state'$.

    The reachability preorder is thus a partial order, so the head
    induces a partially ordered semiautomaton.  By the standard
    characterization of finite partially ordered semiautomata, their
    transition semigroups are $\Rrel$-trivial.  Therefore
    $\transsem_\head\in\Rtrivial$.
\end{proof}

\sharpSoftContainsUtwo*

\begin{proof}
    Use a scalar-valued head with one query and value zero.  Let $p$ be
    the significand precision, set $D=2^p$, and let
    $D^-=D-1$ be its immediate floating-point predecessor.  Choose an
    input $\insym_0$ whose exponential attention weight underflows to
    zero and an input $\insym_1$ with score zero, and hence exponential
    attention weight $1$.  Consider
    \begin{equation}
        \state_0=(0,D^-),
        \qquad
        \state_1=(0,D).
    \end{equation}
The input $\insym_0$ leaves both states unchanged and hence induces the identity on $\{\state_0,\state_1\}$.  Moreover,
    \begin{equation}
        \operatorname{fl}(D^-+1)=D,
        \qquad
        \operatorname{fl}(D+1)=D.
    \end{equation}
Thus, the sharp update induced by $\insym_1$ sends $\state_0$ to $\state_1$ and fixes $\state_1$: at $\state_0$ the denominator advances, whereas at $\state_1$ the increment is absorbed and the entire update is suppressed. 
Its restriction to $\{\state_0,\state_1\}$ is therefore the constant map $E$ with image $\state_1$, while $\insym_0$ restricts to the identity $I$ on the same pair of states.

The products are computed by evaluation: $EE=E$ and $IE=EI=E$, checked at $\state_0$ and at $\state_1$; this is the multiplication of \cref{tab:utwo-action}.
The set $\{\state_0,\state_1\}$ is invariant, and the surjections
\begin{equation}
    \theta(\state_0)=1,\ \theta(\state_1)=0,
    \qquad
    \varphi(I)=1,\ \varphi(E)=0,
\end{equation}
satisfy $\theta(\state\cdot F)=\theta(\state)\cdot\varphi(F)$ in all four cases; for instance $\theta(\state_0\cdot E)=\theta(\state_1)=0=1\cdot 0 =\theta(\state_0)\cdot\varphi(E)$. 
This is the division $\barUtwo\divides(\transsem_\head,\states_\head)$.

\end{proof}

\softAttentionAperiodic*

\begin{proof}
    Fix a query $\query$.  Upon reading a key--value pair $(\key,\val)$,
    the denominator is updated by
    \begin{equation}
        d_\query
        \longmapsto
        \operatorname{fl}\!\left(
            d_\query+\exp(\similarity(\query,\key))
        \right).
    \end{equation}
    The exponential is nonnegative, including when it underflows to zero.
    Since floating-point addition and rounding are order-preserving under
    the fixed semantics, the updated denominator is at least
    $d_\query$.  Thus every transition either strictly advances the
    denominator or leaves it unchanged through underflow or absorption.
    The same is true of every composition of transitions.  Consequently,
    under repeated application of any transformation
    $s\in\transsem_\head$, each denominator follows a nondecreasing
    sequence in a finite chain and therefore eventually stabilizes.

    It remains to rule out a cycle in the numerator after the denominator
    has stabilized.  For a fixed input and query, every scalar coordinate
    of the numerator is updated by addition of the fixed floating-point
    value
    \begin{equation}
        \operatorname{fl}\!\left(
            \exp(\similarity(\query,\key))\val
        \right).
    \end{equation}
    Addition by a fixed value and floating-point rounding are
    order-preserving.  Hence, on each scalar numerator coordinate, every
    $s\in\transsem_\head$ acts as an order-preserving self-map of a finite
    chain.  An order-preserving self-map of a finite chain has no
    nontrivial cycle: depending on whether $x\leq s(x)$ or
    $s(x)\leq x$, its orbit is respectively nondecreasing or
    nonincreasing and must stabilize.

    The head state has finitely many queries and finitely many scalar
    accumulator coordinates.  For each $s$, taking the maximum
    stabilization index over all coordinates and states gives
    $N_s\in\N$ such that $s^{N_s}=s^{N_s+1}$.  Since
    $\transsem_\head$ is finite, the maximum of the $N_s$ gives a common
    $N$ satisfying $s^N=s^{N+1}$ for every
    $s\in\transsem_\head$.  Therefore
    $\transsem_\head\in\aperiodic$.
\end{proof}

\softAttentionContainsUthree*
\begin{proof}
    It suffices to use one scalar-valued head with a singleton as the query set.
    Let $p$ be the significand precision of the floating-point format and
    choose the exactly representable values
    \begin{equation}
        C=2^{p+1},
        \qquad
        D=2^{p+2}.
    \end{equation}
    These values lie in the normal range of every standard IEEE binary
    format.  Their spacing implies
    \begin{equation}
        \operatorname{fl}(C+x)=C
        \quad\text{for }x\in\{-1,0,1\},
        \qquad
        \operatorname{fl}(D+1)=D.
        \label{eq:uthree-absorption}
    \end{equation}

    Take four inputs $\insym_0,\insym_C,\insym_{-C},\insym_1$ whose keys
    all have score zero against the unique query and whose respective
    values are $0,C,-C,1$.  Since $\exp(0)=1$, their numerator increments
    are precisely these four values, whereas each denominator increment
    is $1$.  Consider the three head states
    \begin{equation}
        \state_1=(-1,D),
        \qquad
        \state_a=(0,D),
        \qquad
        \state_b=(1,D),
    \end{equation}
    and set $\states'=\{\state_1,\state_a,\state_b\}$.  By
    \cref{eq:uthree-absorption}, the transformation induced by
    $\insym_0$ fixes every element of $\states'$ and therefore restricts
    to $\identity{\states'}$.

    Let $F_a$ be the transformation induced by the word
    $\insym_C\insym_{-C}$.  For every $(n,D)\in\states'$,
    \begin{equation}
        (n,D)
        \xmapsto{\insym_C}
        (C,D)
        \xmapsto{\insym_{-C}}
        (0,D)
        =\state_a.
    \end{equation}
    Thus, $F_a$ restricts to the constant transformation with image
    $\state_a$.  Similarly, the word
    $\insym_C\insym_{-C}\insym_1$ induces a transformation $F_b$ such
    that
    \begin{equation}
        (n,D)
        \xmapsto{\insym_C\insym_{-C}}
        (0,D)
        \xmapsto{\insym_1}
        (1,D)
        =\state_b
    \end{equation}
    for every $(n,D)\in\states'$.  Hence, $F_b$ restricts to the constant
    transformation with image $\state_b$.
The set $\states'$ is invariant under $\identity{\states'}$, $F_a$, and $F_b$, and a product of constant maps is its last factor, so on $\states'$
    \begin{equation}
        F_xF_y=F_y,
        \quad
        \identity{\states'}F_x=F_x\identity{\states'}=F_x,
        \quad x,y\in\{a,b\},
    \end{equation}
each identity verified by evaluating at any state. This is the multiplication in \cref{tab:uthree-action}. The surjections
\begin{align}
    \theta(\state_1)=1,\quad \theta(\state_a)&=a,\quad \theta(\state_b)=b,\\
    \varphi(\identity{\states'})=1,\quad \varphi(F_a)&=a,\quad \varphi(F_b)=b,
\end{align}
satisfy $\theta(\state\cdot F)=\theta(\state)\cdot\varphi(F)$ in all nine cases, both sides being the label of the last reset, or the old label if none.
This is the division $\barUthree\divides(\transsem_\head,\states_\head)$.
\end{proof}

\begin{remark}[Equality against division]\label{rem:equality-vs-division}
Here, they hold only on $\states'$: before saturation, the denominator counts letters, one per input, and a length counter separates words of different lengths.
For instance $(0,0)\cdot F_b=(1,3)$ while $(0,0)\cdot F_aF_b=(1,5)$, so $F_aF_b\neq F_b$ in $\transsem_\head$, and $F_aF_a \neq F_a$ although $a$ is idempotent in $\uthree$.
The subsemigroup $\gensem{\identity{}, F_a, F_b}$ is therefore strictly larger than $\uthree$, of which it sees the relations only after restriction to the saturated slice: a quotient of a subobject, which is precisely a division, and the reason \cref{lem:sharp-soft-contains-utwo,lem:soft-attention-contains-uthree} assert $\divides$ where \cref{lem:window-one-contains-uone} could assert an isomorphism onto a subobject.
\end{remark}

\section{Semigroup Decomposition Beyond Krohn--Rhodes}\label{app:subregular-kr}
Krohn--Rhodes theory \citep{krohn-rhodes-1965-algebraic} describes the decomposition of any aperiodic transformation semigroup into a wreath product of transformation semigroups whose semigroup is $\uthree$. However, transformers may not be able to instantiate $\uthree$, meaning none of the basic constituents of such a decomposition is attainable. Therefore, we utilize a refinement of Krohn--Rhodes due to \citet{STIFFLER1973159} that further decomposes aperiodic semigroups into smaller semigroups $\uone,\utwo \subsem \uthree$, which we will show can be realized by different types of attention heads.

\begin{theorem}[\citet{krohn-rhodes-1965-algebraic}, semigroup version]
    Every finite transformation semigroup $(\semigroup,\states)$ divides a finite wreath product with factors alternating between copies of $\barUthree$ and simple group factors.
\end{theorem}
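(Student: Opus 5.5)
The plan is to prove the statement by the classical induction on $|\semigroup|+|\states|$, in the style of the Zeiger--Ginzburg ``covering lemma'' proof of Krohn--Rhodes. The wreath product is taken as in \cref{def:wreath-product}, and division is transitive (\cref{def:division}), so we freely compose coverings.

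First we make two harmless reductions. Replacing $(\semigroup,\states)$ by $(\semigroup^1,\states)$ only enlarges the semigroup, and $\semigroup\divides\semigroup^1$, so we may assume the identity map lies in $\semigroup$. The ``alternating'' form is cosmetic: inserting a trivial group factor or a copy of $\barUthree$ on which only its identity acts between two adjacent factors of the same kind changes nothing up to division. So it suffices to show that $(\semigroup,\states)$ divides some finite wreath product of copies of $\barUthree$ and finite simple groups, in any order.

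The induction then splits into three cases.

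\textbf{Case 1: $\semigroup$ acts by permutations.} Then $\semigroup$ is a permutation group $G$ on $\states$. By the Kaloujnine--Krasner embedding, $G$ embeds into the wreath product of $N$ and $G/N$ for a maximal normal subgroup $N$. Iterating along a composition series gives a division of $(G,\states)$ into a wreath product of its simple composition factors, each acting regularly on itself. The action on $\states$ is recovered as a quotient of the regular action on a union of orbits.

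\textbf{Case 2: $\semigroup$ consists of the identity and constant maps.} Encode $\states$ in binary with $k=\lceil\log_2|\states|\rceil$ bits. Each constant map becomes a tuple of bit resets. This exhibits $(\semigroup,\states)$ as dividing a direct product, hence a wreath product, of $k$ copies of $\barUthree$. Each copy is a single flip-flop: the identity, ``reset to $a$'', and ``reset to $b$''.

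\textbf{Case 3: the general step.} If neither case applies, there is a generator $s$ with $\states\cdot s\subsetneq\states$, and $\semigroup$ is not purely a union of the identity and constant maps. Following Zeiger, consider the family of images $\states\cdot t$ for $t\in\semigroup$ with $|\states\cdot t|<|\states|$. We cover $\states$ by a first factor tracking ``which maximal proper image we are in''. This first factor is either a permutation group on those images (handled by Case 1) or a reset-type semigroup (handled by Case 2 and $\barUthree$). The second factor is the action of the stabilizing subsemigroup on a strictly smaller set, namely a maximal proper image. The covering lemma then states
\begin{equation*}
(\semigroup,\states)\divides(\semigroup_1,\states_1)\wr(\semigroup_2,\states_2),
\end{equation*}
with $|\semigroup_i|+|\states_i|<|\semigroup|+|\states|$ in the relevant factor, and the induction hypothesis applies to both factors.

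The main obstacle is this covering lemma. The first coordinate must determine the image set before each update, matching the left-wreath convention of \cref{def:wreath-product}. The second coordinate must then be a well-defined transformation of a smaller set, independent of representatives. The first thing I would try is the holonomy-style choice: fix, for each image $I$ in the family, a bijection to a chosen representative image $I_0$ of the same size. I would then define the second-coordinate action through these bijections, so that compatibility reduces to checking the identity $\state\cdot(st)=(\state\cdot s)\cdot t$ on representatives. If well-definedness fails, I would fall back on the standard Rhodes--Tilson presentation, which takes the decomposition through the ideal structure of $\semigroup$: the $\Jrel$-classes and their Schützenberger groups.
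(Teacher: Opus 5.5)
The paper gives no proof of this statement (it is quoted from Krohn--Rhodes), so your sketch has to stand on its own, and its inductive step does not. The first factor is not ``either a permutation group or a reset-type semigroup''. Permutation elements of $\semigroup$ permute the blocks, while each non-permutation $t$ sends every block into one block containing $\states\cdot t$. So the block action is a \emph{permutation--reset} semigroup with both kinds of maps at once, and neither Case~1 nor Case~2 applies to it.

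Your covering step also cannot shrink such semigroups. Take $\semigroup=\langle g\rangle\cup\{\text{constant maps}\}$ with $g$ a $|\states|$-cycle; this is a Case~3 instance. The maximal proper images are the singletons, the second factor is trivial, and the first factor is $(\semigroup,\states)$ itself, so the induction makes no progress.

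The missing ingredient is the permutation--reset lemma. Suppose every element of $\semigroup_1$ acts on $X$ as a permutation or a constant, and let $G$ be its permutations. Then $(\semigroup_1,X)\divides(G,G)\wr(K_X,X)$, where $K_X$ is the set of constant maps on $X$ together with $\identity{X}$. The lower factor holds the product $g$ of all permutations read so far. A reset to $y$ writes $y\cdot g^{-1}$ into the upper factor, which is allowed because in the left wreath product the upper update may depend on $g$. The covering map is $(g,r)\mapsto r\cdot g$. Together with Kaloujnine--Krasner for $(G,G)$ and your binary coding of resets, this is the base case the induction actually needs.

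Three further repairs are needed.

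\textbf{Covering.} Maximal proper images need not cover $\states$. For $\states=\{1,2,3\}$ and $x\colon 1,2\mapsto 1,\ 3\mapsto 2$, the only maximal proper image is $\{1,2\}$, so your first coordinate is undefined at $3$. Add the singletons, or simply use all sets $\states\setminus\{\state\}$. The latter already suffices for the statement as written: a permutation $s$ maps $\states\setminus\{\state\}$ onto $\states\setminus\{\state\cdot s\}$, and a non-permutation maps every block into a single such set.

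\textbf{Induction measure.} $|\semigroup|+|\states|$ is the wrong measure, for two reasons. The number of blocks can exceed $|\states|$. And the second factor is not the stabilizer of one image: it is the semigroup generated by all transported maps $i\mapsto\beta_{B'}(\beta_B^{-1}(i)\cdot s)$, extended arbitrarily off $\beta_B(B)$. Here $\beta_B\colon B\to\{1,\dots,m\}$ are fixed injections and $B'$ is the block chosen for $(B,s)$. The size of this semigroup is not controlled by $|\semigroup|$. Induct on $|\states|$ alone, since the second factor acts on $m<|\states|$ points, and dispose of the first factor with the lemma above.

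\textbf{Well-definedness.} With arbitrary fixed $\beta_B$ there is no well-definedness problem. The cover is checked on generators, and the induced morphism onto $\semigroup$ is well defined because the covering map is surjective and $\semigroup$ acts faithfully. So the holonomy or Rhodes--Tilson fallback is unnecessary here. Coherent choices of the $\beta_B$ matter only for the sharper form in which every simple group factor divides $\semigroup$. Your construction does not yield that form, and it is the one \cref{thm:aperiodic-decomposition} actually relies on.
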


\begin{corollary}
\label{thm:aperiodic-decomposition}
    Let $(\semigroup,\states)$ be a finite transformation semigroup. Then the semigroup $\semigroup$ is aperiodic, $\semigroup \in \aperiodic$, if and only if $(\semigroup,\states)$ divides a finite wreath product of copies of $\barUthree$:
    \begin{equation}
        \semigroup\in\Aperiodic
        \quad\Longleftrightarrow\quad
        \exists\,\states\colon
        (\semigroup,\states)
        \divides
        \barUthree\wr\cdots\wr\barUthree.
    \end{equation}
\end{corollary}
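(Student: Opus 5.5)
The corollary is a two-way implication, and I would prove it by specializing the Krohn--Rhodes theorem just stated for the forward direction and by closure properties of $\Aperiodic$ for the converse.

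\emph{Converse ($\Leftarrow$).} First, $\uthree$ is aperiodic: each element is idempotent ($aa=a$, $bb=b$, $11=1$), so $s=s^2$ for every $s$. Next, $\Aperiodic$ is closed under finite wreath products. Take $(s,\phi)\in S_1\wr S_2$ with $S_1,S_2$ aperiodic. A power $(s,\phi)^k$ has first coordinate $s^k$, and its second coordinate at $\state_1$ is the product $\phi(\state_1)\phi(\state_1 s)\cdots\phi(\state_1 s^{k-1})$. For large $k$, $s^k$ is idempotent, and the orbit $\state_1 s^j$ is eventually periodic with period $1$ because $s^N=s^{N+1}$. Hence the second coordinate eventually becomes a prefix followed by growing powers of the single element $\phi(\state_1 s^N)\in S_2$. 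Aperiodicity of $S_2$ makes that tail stabilize, and $S_1$ has only finitely many states, so the whole power sequence stabilizes. Finally, $\Aperiodic$ is closed under subsemigroups and quotients, since $s^n=s^{n+1}$ is preserved by morphisms and restriction. Hence it is closed under division, and any $(\semigroup,\states)$ dividing $\barUthree\wr\cdots\wr\barUthree$ has $\semigroup$ aperiodic.

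\emph{Forward ($\Rightarrow$).} Apply the Krohn--Rhodes theorem to get $(\semigroup,\states)\divides$ a wreath product of copies of $\barUthree$ and simple group factors $G_i$. The standard refinement is that only groups dividing $\semigroup$ need appear. To get it, I would cite the holonomy or prime decomposition version: the group components can be chosen among the simple group divisors of $\semigroup$, namely the composition factors of its maximal subgroups. If $\semigroup$ is aperiodic, every subgroup is trivial; indeed an element $g$ of order $m>1$ would give $g^n\neq g^{n+1}$ for all $n$. So every group factor is trivial, and a trivial transformation group $(\{1\},\{*\})$ is absorbed into an adjacent wreath factor up to isomorphism. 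Deleting these factors leaves only copies of $\barUthree$.

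The main obstacle is the forward direction. The Krohn--Rhodes statement quoted above does not literally say that the group factors divide $\semigroup$, so the proof must invoke the stronger, standard form in which they do (Krohn--Rhodes prime decomposition theorem). I would state this explicitly as the cited ingredient rather than reprove it. A second, smaller check is that the statement's quantifier ``$\exists\,\states$'' is harmless. Division of transformation semigroups implies division of the underlying semigroups, and for the converse any faithful state set works. Taking the given $\states$ in the forward direction is fine, because the decomposition theorem applies to any finite transformation semigroup.
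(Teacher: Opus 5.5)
Your proposal is correct and takes essentially the paper's route. The paper presents this statement as an immediate consequence of the Krohn--Rhodes theorem stated just before it, and relies for the converse on the closure of $\Aperiodic$ under wreath products and division that it invokes in the star-free bound. Your point that the quoted theorem must be used in its stronger prime-decomposition form, where the group factors divide $\semigroup$ and so are trivial when $\semigroup$ is aperiodic, makes explicit a step the paper leaves implicit.
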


\begin{theorem}[\citet{STIFFLER1973159}, Thm 3.4 (a) and Fact 1.9]
\label{thm:definite-decomposition}
    For every finite semigroup $\semigroup$,
    \begin{equation}
        \semigroup\in\Definite
        \quad\Longleftrightarrow\quad
        \exists\,\states\colon
        (\semigroup,\states)
        \divides
        \barUone\wr\cdots\wr\barUone.
    \end{equation}
    Moreover, $\Definite$ is closed under finite wreath products and
    division. Consequently, if a finite transformation semigroup
    $(\semigroup,\states)$ divides a finite cascade whose component
    semigroups are definite, then
    $\semigroup\in\Definite$.
\end{theorem}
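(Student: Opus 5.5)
Here is how I would prove \cref{thm:definite-decomposition}. It has three parts: the characterization ($\Leftarrow$ and $\Rightarrow$), closure of $\Definite$ under finite wreath products and division, and the stated consequence. The last part follows formally from the first two: a cascade of definite components lies in $\Definite$, and division preserves membership.

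\emph{Closure and ($\Leftarrow$).} The useful reformulation is that $S \in \Definite$ iff there is $n$ such that $s\,t_1\cdots t_n = t_1\cdots t_n$ for all $s,t_i \in S$. For a finite transformation semigroup $(S,\states)$ this means that every product of $n$ generators acts as a map that does not depend on what came before it.

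\begin{itemize}
\item \emph{Division.} Closure under division is direct. For subsemigroups, $T^n \subseteq S^n$. For quotients, lift $s,t$ through the surjection and use that the surjection maps $S^n$ onto the image's $n$-fold products.
\item \emph{Wreath products.} Take $(s,\phi)$ in $(S_1,\states_1)\wr(S_2,\states_2)$ with $S_1,S_2$ definite of indices $n_1,n_2$. A product of $n_1+n_2$ elements $(s_i,\phi_i)$ has first coordinate $s_1\cdots s_{n_1+n_2}$. Its second coordinate, evaluated at $\state_1$, equals $\phi_1(\state_1)\,\phi_2(\state_1 s_1)\cdots$.
\item \emph{Absorbing the prefix.} Prepending an element $(s,\phi)$ has two effects. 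In the first coordinate it is absorbed because $n_1 \le n_1+n_2$. In the second coordinate it prepends a factor and shifts the evaluation points by $s$. After $n_1$ steps, however, the evaluation point $\state_1 s\, s_1\cdots s_{n_1}$ equals $\state_1 s_1\cdots s_{n_1}$, by definiteness acting on states (or on the transformation-semigroup image). The last $n_2$ factors of the second coordinate are therefore unchanged, and they absorb everything before them in $S_2$.
\item \emph{Conclusion of this part.} Index $n_1+n_2$ works. Since $\barUone$ is definite with $n=1$, this yields ($\Leftarrow$).
\end{itemize}

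\emph{($\Rightarrow$).} Let $S$ be definite of index $n$. I would realize $S$ as a divisor of a shift register of depth $n$ built from $\barUone$-type reset components. First replace $\uone$ by the right-zero semigroup on the alphabet $S$; this divides a wreath power of $\barUone$ via a binary encoding of $S$ in $\lceil\log_2|S|\rceil$ parallel bits, and a parallel product embeds in a wreath product. Then form a cascade of $n$ such reset components. Component $j$ stores the letter input $j$ steps ago, implemented by resetting to the value currently held in component $j-1$, which the wreath coordinate $\phi$ permits. The state after reading a word $w$ with $|w|\ge n$ is its last $n$ letters. By definiteness, the element of $S$ represented by $w$ depends only on those letters (together with the initial state for shorter words), so the map from states to $S^1$ realizes $(S,S^1)$ as a quotient of a subsemigroup.

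\emph{Main obstacle.} The delicate step is making the division in ($\Rightarrow$) precise for words shorter than $n$, and building the shift register from $\uone$ rather than larger right-zero semigroups. I would handle short words by reserving an extra ``empty'' symbol as the initial state of each register cell. I would handle the encoding by Stiffler's observation that right-zero semigroups on $k$ letters divide $\barUone^{\wr\lceil\log_2 k\rceil}$; this is essentially Fact~1.9, which I would cite rather than reprove.
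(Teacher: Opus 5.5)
Your argument is correct, and it takes a genuinely different route from the paper. The paper gives no proof of its own: it imports the statement from Stiffler (Thm.~3.4(a) together with Fact~1.9), i.e.\ from his refinement of the Krohn--Rhodes prime decomposition. You instead verify everything by hand. Closure under division is immediate. Closure under the left wreath product follows from your observation that prepending $(s,\phi)$ to a product of $n_1+n_2$ elements changes neither the first coordinate nor the evaluation points, since $\state_1 s s_1\cdots s_k=\state_1 s_1\cdots s_k$ for $k\ge n_1$; the last $n_2$ factors of the second coordinate therefore absorb everything before them. The direction ($\Rightarrow$) comes from an explicit shift register. There, the paper's convention of evaluating $\phi$ at the \emph{pre-update} upstream state is exactly what lets cell $j$ receive the previous content of cell $j-1$.

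Your route gives a self-contained proof carried out directly in the paper's left-wreath, right-action convention, so nothing hinges on translating Stiffler's different conventions correctly. It also gives explicit bounds: index $n_1+n_2$ for a wreath product, and $n\lceil\log_2(|\semigroup|+1)\rceil$ copies of $\barUone$ for the converse. The citation buys uniformity with the $\Rtrivial$ and $\LocallyR$ cases, where no comparably simple hand construction is available.

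One detail to state precisely: the empty marker must be a letter of the right-zero alphabet (work with the right-zero semigroup on $\semigroup\sqcup\{\bot\}$), not the adjoined state $1$ of $U_1^1$. The reason is that $\uone$ has no identity and no element sends any state to $1$. Every cell is reset at every step, so an empty cell can be shifted forward only if emptiness is itself a reset target. With that change, $\theta(r_1,\ldots,r_k,\bot,\ldots,\bot)=r_k\cdots r_1$ (and $\theta=1$ on the all-$\bot$ state) satisfies $\theta(y\cdot g_s)=\theta(y)\,s$. The full-register case is $r_n(r_{n-1}\cdots r_1s)=r_{n-1}\cdots r_1s$ by definiteness. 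Your binary-encoding argument already makes the appeal to Fact~1.9 unnecessary.
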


\begin{theorem}[\citet{STIFFLER1973159}, Thm. 3.4(b) and Fact 1.9]
\label{thm:rtrivial-decomposition}
    For every finite semigroup $\semigroup$,
    \begin{equation}
        \semigroup\in\Rtrivial
        \quad\Longleftrightarrow\quad
        \exists\,\states\colon
        (\semigroup,\states)
        \divides
        \barUtwo\wr\cdots\wr\barUtwo.
    \end{equation}
    Moreover, $\Rtrivial$ is closed under finite wreath products and
    division. Consequently, if a finite transformation semigroup
    $(\semigroup,\states)$ divides a finite cascade whose component
    semigroups are $\mathcal R$-trivial, then
    $\semigroup\in\Rtrivial$.
\end{theorem}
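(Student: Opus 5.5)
The plan is to prove the equivalence through \emph{extensive} transformation semigroups. Call $(\semigroup,\states)$ extensive if some partial order $\leq$ on $\states$ satisfies $\state \leq \state\cdot s$ for all $\state\in\states$ and $s\in\semigroup$. I will prove two bridging facts:
\begin{enumerate*}[label=(\roman*)]
\item every extensive transformation semigroup has an $\Rrel$-trivial semigroup;
\item if $\semigroup\in\Rtrivial$, then $(\semigroup,\semigroup^1)$ with right multiplication is extensive.
\end{enumerate*}
For (i), suppose $s=tu$ and $t=sv$. Then for every state,
\[
\state\cdot s\leq \state\cdot sv=\state\cdot t\leq \state\cdot tu=\state\cdot s,
\]
so $\state\cdot s=\state\cdot t$ for all $\state$, and faithfulness gives $s=t$. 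This is the same reachability-order argument already used in \cref{lem:sharp-soft-r-trivial}. For (ii), I take the $\Rrel$-preorder on $\semigroup^1$, ordered so that $x\leq y$ when $y\in x\semigroup^1$. By $\Rrel$-triviality it is antisymmetric, hence a partial order, and clearly $x\leq xs$.

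For the direction ($\Leftarrow$) and the closure statements, I first check directly that $\barUtwo$ is extensive, with order $1<0$ on $U_2^1$. Next I show that a left wreath product of extensive transformation semigroups is extensive for the lexicographic order on $\states_1\times\states_2$. Indeed, $(\state_1,\state_2)\cdot(s,\phi)=(\state_1 s,\state_2\phi(\state_1))$. Either $\state_1 s>\state_1$, and the pair strictly increases lexicographically, or $\state_1 s=\state_1$ and $\state_2\phi(\state_1)\geq \state_2$. So by (i) every wreath product of $\Rrel$-trivial components, presented via (ii), is $\Rrel$-trivial.

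Closure of $\Rtrivial$ under division also needs an argument. Subsemigroups are immediate. For quotients I will use the equational description valid in finite semigroups,
\[
\semigroup\in\Rtrivial\iff (xy)^\omega x=(xy)^\omega\quad\text{for all }x,y.
\]
It holds in the quotient because the identity is preserved by surjective morphisms: $\omega$-powers map to $\omega$-powers. I will quickly verify this characterization. Its forward part is that $(xy)^\omega\,\Rrel\,(xy)^\omega x$, and its backward part is the standard manipulation with $s=tu$ and $t=sv$, which gives $s=s(vu)^\omega$. Combining these facts yields both the "Moreover" and the "Consequently" clauses.

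The direction ($\Rightarrow$) is the main step, and I will do it by induction on $|\states|$ for an extensive $(\semigroup,\states)$, starting from $(\semigroup,\semigroup^1)$ via (ii). If $|\states|=1$, the semigroup is trivial and divides $\barUtwo$. Otherwise I pick a maximal state $m$; by extensivity every transformation fixes it. I then form $\states'=\states\setminus\{m\}$ and redirect every transition $\state\cdot x=m$ to the self-loop $\state\cdot x=\state$. The resulting generators give an extensive system $(\semigroup',\states')$, to which induction applies.

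Above it I cascade one flag $\barUtwo$ component. The wreath coordinate $\phi_x(\state')$ is the constant map "set" when the original transition sends $\state'$ to $m$, and the identity otherwise. Because the left wreath product evaluates $\phi$ at the first component's state \emph{before} it updates, this is exactly the needed information. The map $(\state',\text{unset})\mapsto\state'$, $(\state',\text{set})\mapsto m$ is then a surjection on states compatible with the generators. Once the flag is set, the first component keeps evolving, but harmlessly, since $m$ is a sink. This yields the division $(\semigroup,\states)\divides(\semigroup',\states')\wr\barUtwo$.

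I expect the main obstacle to be bookkeeping rather than ideas. First, I must phrase division of transformation semigroups so that the generator-level compatibility extends to the generated semigroups, which should follow as in \cref{prop:eta-morphism}. Second, I must make sure the semigroup/monoid distinction of \cref{rem:why-semigroups} causes no trouble: $\barUtwo$ contains the identity element $1$, so both the identity and the set action are available as flag actions.
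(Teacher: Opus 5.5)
Your proof is correct and takes a genuinely different route from the paper. The paper gives no argument of its own here: it imports both the equivalence and the closure properties from \citet{STIFFLER1973159} (Thm.~3.4(b), Fact~1.9), and its closing remark only adjusts for his right-wreath convention.

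You instead give a self-contained argument built on extensive actions:
\begin{itemize}
\item your (i) is exactly the ``standard characterization of partially ordered semiautomata'' that \cref{lem:sharp-soft-r-trivial} invokes without proof;
\item the lexicographic order gives closure under $\wr$;
\item the identity $(xy)^\omega x=(xy)^\omega$ gives closure under quotients;
\item the peeling induction produces an explicit cascade with one $\barUtwo$ factor per state. A maximal state is absorbing, transitions into it are redirected to self-loops, and entry into it is recorded by a flag that reads the controller's pre-update state.
\end{itemize}
This is stated directly in the paper's left-wreath convention, so no translation is needed. The citation buys brevity and places the result in Stiffler's uniform framework, which also supplies \cref{thm:definite-decomposition,thm:locally-r-decomposition}. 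Your route buys transparency and an explicit cascade that feeds directly into the tightness argument of \cref{cor:attention-bounds-tight}.

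One step must be added before the ``Consequently'' clause follows. Your (ii) gives extensivity only for the regular representation $(\semigroup,\semigroup^1)$, so your wreath-closure argument covers only cascades whose factors act on $\semigroup_i^1$. The clause, as used in \cref{cor:sharp-r-trivial}, concerns cores acting on their own state sets. You cannot simply swap state sets: the semigroup of $(\semigroup_1,\states_1)\wr(\semigroup_2,\states_2)$ depends on $\states_1$ through $\semigroup_2^{\states_1}$. Moreover, $(\semigroup,\states)$ need not divide $(\semigroup,\semigroup^1)$; a trivial semigroup acting on two points is a counterexample.

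The fix uses the identity you already prove. For any action of an $\Rrel$-trivial semigroup, the reachability preorder is antisymmetric. Indeed, $\state'=\state\cdot s$ and $\state=\state'\cdot t$ give $\state=\state\cdot(st)^\omega$, hence $\state'=\state\cdot(st)^\omega s=\state\cdot(st)^\omega=\state$. So every factor of such a cascade is extensive, and your lexicographic argument together with (i) applies.

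Finally, state explicitly that $\wr$ is monotone under division in its first argument. Your induction uses this to pass from $(\semigroup',\states')\divides\barUtwo\wr\cdots\wr\barUtwo$ to $(\semigroup',\states')\wr\barUtwo\divides\barUtwo\wr\cdots\wr\barUtwo\wr\barUtwo$.
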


\begin{theorem}[\citet{STIFFLER1973159}, Thm 3.4 (c), Fact 1.9, Cor. 12 (c) and Rem. 2.13]
\label{thm:locally-r-decomposition}
    For every finite semigroup $\semigroup$,
    \begin{equation}
        \semigroup\in\LocallyR
        \quad\Longleftrightarrow\quad
        \exists\,\states\colon
        (\semigroup,\states)
        \divides
        \barUone\wr\dots\wr\barUone\wr\barUtwo\wr\cdots\wr\barUtwo.
    \end{equation}
    Furthermore, $\Definite$ and $\Rtrivial$ are closed under finite wreath products and
    division. Therefore, if a finite transformation semigroup
    $(\semigroup,\states)$ divides a finite cascade whose component
    semigroups are $\mathcal R$-trivial and definite, then $\semigroup\in\LocallyR$.
\end{theorem}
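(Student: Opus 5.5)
The statement has two directions and a closure consequence. I would treat the direction ``divides a cascade $\Rightarrow\LocallyR$'' in full, since it is the one used in \cref{cor:mixed-locally-r-trivial}. For the converse I would rely on Stiffler's decomposition, as the citation indicates.

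\emph{Reduction to a two-block cascade.} By \cref{thm:definite-decomposition} and \cref{thm:rtrivial-decomposition}, the $\barUone$-block is a transformation semigroup $(T_1,\states_1)$ with $T_1\in\Definite$, and the $\barUtwo$-block is $(T_2,\states_2)$ with $T_2\in\Rtrivial$. The same holds for any cascade with definite lower cores and $\Rrel$-trivial higher cores: group the lower factors into one wreath product, and likewise the higher ones. This uses associativity of iterated wreath products up to isomorphism, together with closure of $\Definite$ and $\Rtrivial$ under wreath products. $\LocallyR$ is defined by an equational-type condition on idempotents, and it is closed under subsemigroups and quotients: idempotents lift along surjections of finite semigroups, and $e\semigroup e$ passes to subsemigroups and quotients. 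So it suffices to show that $\wreathsem=(T_1,\states_1)\wr(T_2,\states_2)$ lies in $\LocallyR$.

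\emph{Key computation.} Let $e=(e_1,\varepsilon)$ be idempotent in $\wreathsem$. Then $e_1$ is idempotent in $T_1$, so $e_1=e_1^n\in T_1^n$ for every $n$. Definiteness then gives $t e_1=e_1$ for all $t\in T_1$. Hence every element of $e\wreathsem e$ has first coordinate $e_1 t e_1=e_1$. For $x=(e_1,\phi)$ and $y=(e_1,\psi)$ in $e\wreathsem e$ we have $xy=(e_1,\ z\mapsto\phi(z)\psi(z\cdot e_1))$.

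On $I\defeq\states_1\cdot e_1$ we have $z\cdot e_1=z$, because $e_1$ is idempotent. So the restriction $r\colon (e_1,\phi)\mapsto\phi|_{I}$ is a morphism from $e\wreathsem e$ into the direct power $T_2^{I}$, which is $\Rrel$-trivial.

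Now write $x=exe$ explicitly: $\phi_x(z)=\varepsilon(z)\,\chi(z\cdot e_1)$, where $\chi$ depends only on the values of the middle factor on $I$. From this I would check that $x$ is determined by $\varepsilon$ together with $r(x)$, so $r$ is injective on $e\wreathsem e$. Granting this, suppose $x\,\Rrel\,y$ in $e\wreathsem e$. Then $r(x)\,\Rrel\,r(y)$ in $T_2^{I}$, hence $r(x)=r(y)$, hence $x=y$. So $e\wreathsem e\in\Rtrivial$, and therefore $\wreathsem\in\LocallyR$.

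The step I expect to be delicate is this injectivity bookkeeping, which relies on the left-wreath convention of \cref{def:wreath-product}. If it fails, I would instead show $x\,\Rrel\,y\Rightarrow x=y$ directly: take the witnesses $x=yu$ and $y=xv$, sandwich them as $x=y(eue)$ and $y=x(eve)$, and compare pointwise on $I$ and then off $I$ via $\varepsilon$.

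\emph{Converse (main obstacle).} The direction $\semigroup\in\LocallyR\Rightarrow$ division into $\barUone\wr\cdots\wr\barUone\wr\barUtwo\wr\cdots\wr\barUtwo$ is the genuinely hard part. I would cite Stiffler's Thm.~3.4(c) with Cor.~12(c). A self-contained route would be the identity $\LocallyR=\Rtrivial*\Definite$ via Tilson's derived-category theorem, which needs the delay/local-to-global argument for $\Rtrivial$. Combined with the first two decomposition theorems, this would give the displayed cascade.
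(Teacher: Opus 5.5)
Your proposal is correct and goes further than the paper, which proves nothing here. It imports both directions, the closure facts and the cascade consequence from Stiffler (Thm.~3.4(c), Fact~1.9, Cor.~12(c), Rem.~2.13). You instead prove the direction ``cascade $\Rightarrow \LocallyR$'' directly, by embedding each local subsemigroup $e\wreathsem e$ into a direct power $T_2^{I}$, and cite only the converse. The injectivity step you flag as delicate is routine. Since $x\in e\wreathsem e$ gives $x=ex$, you get $\phi_x(z)=\varepsilon(z)\,\phi_x(z\cdot e_1)$ with $z\cdot e_1\in I$. So $\phi_x$ is determined by $\varepsilon$ and $r(x)=\phi_x|_I$. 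Equivalently, combine your $\chi$-formula with $\varepsilon(z)=\varepsilon(z)\,\varepsilon(z\cdot e_1)$, which is just $e^2=e$. Thus $r$ is an injective morphism, and the fallback is unnecessary. Your route gives a self-contained, elementary proof of exactly the direction used by the upper bound in \cref{cor:mixed-locally-r-trivial}. Your reading of the final clause, with definite factors below the $\Rrel$-trivial ones, matches that use. Moreover, the only properties of $\Rtrivial$ you use are closure under finite direct powers and subsemigroups. So the same computation shows that any left wreath product with a definite controlling factor and a controlled factor in a pseudovariety $\mathbf{V}$ has its semigroup in $\mathbf{LV}$. What the citation gives is the hard converse: every locally $\Rrel$-trivial semigroup divides such a cascade (in effect, locality of $\Rtrivial$). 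The paper needs this only for tightness in \cref{cor:attention-bounds-tight}. You defer it to the same source, so on that direction the two accounts coincide.
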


\begin{remark}
    Note the different notational conventions: 
    \begin{enumerate}
        \item Whereas \citet{STIFFLER1973159} uses right wreath products, we define wreath factors to act from the left to better match the common indexing of layers in neural networks. Hence, our restatement of \cref{thm:locally-r-decomposition} has $\barUone$s on the left (think: closer to the transformer input) and $\barUtwo$s on the right (higher up in the layers), the reverse of \citet{STIFFLER1973159}.
        \item We adopt the notation of \citet{STIFFLER1973159} for the prime semigroups $\uone,\utwo, \uthree$. Note that these have different names elsewhere; for instance, our $\uthree$ is called $\utwo$ in \citet{eilenberg1976}, etc.
    \end{enumerate}
\end{remark}

\end{document}